\spnewtheorem{thm}[theorem]{Theorem}{\bfseries}{\itshape}
\spnewtheorem{clm}[theorem]{Claim}{\bfseries}{\itshape}
\spnewtheorem{cor}[theorem]{Corollary}{\bfseries}{\itshape}
\spnewtheorem{cnj}[theorem]{Conjecture}{\bfseries}{\itshape}
\spnewtheorem{lem}[theorem]{Lemma}{\bfseries}{\itshape}
\spnewtheorem{lemdefn}[theorem]{Lemma and Definition}{\bfseries}{\itshape}
\spnewtheorem{prop}[theorem]{Proposition}{\bfseries}{\itshape}
\spnewtheorem{defn}[theorem]{Definition}{\bfseries}{\upshape}
\spnewtheorem{rem}[theorem]{Remark}{\bfseries}{\upshape}
\spnewtheorem{notation}[theorem]{Notation}{\bfseries}{\upshape}
\spnewtheorem{expl}[theorem]{Example}{\bfseries}{\upshape}
\spnewtheorem{thmdefn}[theorem]{Theorem and Definition}{\bfseries}{\itshape}
\spnewtheorem{propdefn}[theorem]{Proposition and Definition}{\bfseries}{\itshape}
\spnewtheorem{assumption}[theorem]{Assumption}{\bfseries}{\upshape}
\spnewtheorem{algorithm}[theorem]{Algorithm}{\bfseries}{\upshape}
\renewenvironment{theorem}{\begin{thm}}{\end{thm}}
\renewenvironment{lemma}{\begin{lem}}{\end{lem}}
\renewenvironment{definition}{\begin{defn}}{\end{defn}}
\renewenvironment{remark}{\begin{rem}}{\end{rem}}
\renewenvironment{example}{\begin{expl}}{\end{expl}}
\Crefname{expl}{Example}{Examples}
\Crefname{lem}{Lemma}{Lemmas}
\Crefname{thm}{Theorem}{Theorems}
\Crefname{lemdefn}{Lemma and Definition}{Lemma and Definitions}
\Crefname{defn}{Definition}{Definitions}
\newcommand*{\by}[1]{(\text{#1})}
\newcommand{\dfun}{\mathcal{D}}
\newcommand{\pfun}{\mathcal{P}}
\newcommand{\cfun}{\mathcal{C}}
\newcommand{\catC}{\mathbf{C}}
\newcommand{\set}{\mathbf{Set}}
\newcommand{\pmet}{\mathbf{PMet}}
\newcommand{\met}{\mathbf{Met}}
\newcommand{\kantrel}[1]{K^\mathsf{rel}_{#1}}
\newcommand{\kantsym}[1]{K_{#1}}
\newcommand{\wass}[1]{W_{#1}}
\newcommand{\dk}{\delta^\mathsf{KR}} % Kantorovich-Rubinstein metric
\renewcommand{\dh}{\delta^\mathsf{H}} % Hausdorff metric
\newcommand{\dhk}{\delta^\mathsf{HK}} % Hausdorff-Kantorovich metric
\newcommand{\dlp}{\delta^\mathsf{LP}} % Lévy-Prokhorov metric
\newcommand{\de}{d_e} % Euclidean distance 
\newcommand{\nat}{\mathbb{N}}
\newcommand{\real}{\mathbb{R}}
\newcommand{\preal}{[0,1]}
\renewcommand{\epsilon}{\varepsilon}
\renewcommand{\phi}{\varphi}
\newcommand{\bfunctional}[1]{\Phi_{#1}} % Functional for which behavioiural distance is a fixpoint
\newcommand{\bmetric}[1]{\mu\bfunctional{#1}} % Functional for which behavioiural distance is a fixpoint
\newcommand{\To}{\Rightarrow}
\newcommand{\relto}{\mathbin{\ooalign{$\rightarrow$\cr$\hspace{0.12ex}+$\cr}}}
\newcommand*{\expect}[2]{\mathbb{E}_{#1}({#2})}
\newcommand*{\expectSymb}{\mathbb{E}}
\newcommand*{\ball}[3]{B^{#1}_{#2}({#3})}
\newcommand*{\conv}{\mathsf{conv}}
\newcommand*{\Lip}{\mathsf{Lip}} % Lipschitz functions
\newcommand*{\Bor}{\mathsf{Bor}} % Borel measures
\newcommand*{\lipnorm}[1]{\|#1\|_{\mathsf{Lip}}}
\newcommand*{\opnorm}[1]{\|#1\|_{\mathsf{op}}}
\newcommand*{\sint}[3]{{\textstyle\int_{#1} {#2} \,\dif {#3}}}
\newcommand*{\dif}{\mathrm{d}}
\title{Generalized Kantorovich-Rubinstein Duality beyond Hausdorff and Kantorovich
  \thanks{The authors acknowledge support by the Deutsche Forschungsgemeinschaft (DFG, German Research Foundation).
  The second author has been supported by project number 531706730 (CoRSA), while the remaining authors have been supported by project number 434050016 (SpeQt).
  }
}
\author{
  Paul Wild\inst{1}\orcidlink{0000-0001-9796-9675} \and
  Lutz Schröder\inst{1}\orcidlink{0000-0002-3146-5906} \and
  Karla Messing\inst{2}\orcidlink{0009-0003-1019-6449} \and
  Barbara König\inst{2}\orcidlink{0000-0002-4193-2889} \and
  Jonas Forster\inst{1}\orcidlink{0000-0002-5050-2565}
}
\authorrunning{P. Wild, L. Schröder, K. Messing, B. König, J. Forster}
\institute{Friedrich-Alexander-Universität Erlangen-Nürnberg, Erlangen, Germany 
  \email{\{paul.wild,lutz.schroeder,jonas.forster\}@fau.de}
\and
  Universität Duisburg-Essen, Duisburg, Germany
  \email{\{karla.messing,barbara\_koenig\}@uni-due.de}
}
\begin{document}

\maketitle

\begin{abstract}
  The classical Kantorovich-Rubinstein duality guarantees coincidence
  between metrics on the space of probability distributions defined on
  the one hand via transport plans (\emph{couplings}) and on the other
  hand via price functions. Both constructions have been lifted to the
  level of generality of set functors, with the construction based on
  couplings referred to as the \emph{Wasserstein} or simply the
  \emph{coupling-based} lifting, and the price-function-based
  construction as the \emph{Kantorovich} or \emph{codensity} lifting,
  both based on a choice of quantitative modalities for the given
  functor. It is known that every coupling-based lifting can be
  expressed as a price-function-based lifting; however, the latter in
  general needs to
  use additional modalities. We give an example showing that this
  cannot be avoided in general. We refer to cases in which the same
  modalities can be used as satisfying the \emph{generalized
  Kantorovich-Rubinstein duality}. We establish the generalized
  Kantorovich-Rubinstein duality in this sense for two important
  cases: The L\'evy-Prokhorov distance on distributions, which finds
  wide-spread applications in machine learning due to its favourable
  stability properties, and the standard metric on convex sets of
  distributions that arises by combining the Hausdorff and
  Kantorovich-Rubinstein distances.
\end{abstract}

\section{Introduction}
\label{sec:introduction}

Measuring behavioural distances between probabilistic systems requires
notions of distance between probability distributions
(e.g.~\cite{bw:behavioural-pseudometric}). One well-established metric
on the set of distributions over a metric space is variously termed
the \emph{Kantorovich-Rubinstein}, \emph{Wasserstein}, or \emph{Hutchinson}
metric. It can be calculated either by minimizing over the expected
value of \emph{transport plans} between, or \emph{couplings} of, the
given distributions, or by maximizing over the difference of
expectations taken over all nonexpansive \emph{price functions}. The
coincidence of these two values is the classical
\emph{Kantorovich-Rubinstein
  duality}~\cite[Theorem~5.10]{v:optimal-transport}. Intuitively
speaking, a transport plan or coupling is a way to transform one
distribution into another by shifting around weight, and its cost, to
be minimized, is determined by how much weight is shifted over which
distances. On the other hand, a price function determines a price for
some commodity at given points; nonexpansiveness of the price
function means that no profit can be made from full-cost transport.
The difference between the expected values of a price function under
the given distributions is the profit to be made by having the
commodity transported, and hence the amount one can offer to a
logistics provider when outsourcing the transport.

Both the coupling-based definition and the price-function-based
definition have been generalized categorically to construct liftings
of set functors to the category of (pseudo-)metric
spaces~\cite{bbkk:coalgebraic-behavioral-metrics} and quantitative lax
extensions of set functors~\cite{FuzzyLaxHemi}, where the latter are
distinguished by applying to unrestricted quantitative relations
instead of only to (pseudo-)metrics. Metric functor liftings and lax
extensions in particular serve to give a general treatment of
\emph{behavioural distances} on quantitative systems such as
probabilistic, weighted, or metric~\cite{afs:linear-branching-metrics}
transition systems in the framework of \emph{universal
coalgebra}~\cite{Rutten00}. In this framework, functors serve as
parameters determining a type of systems as their coalgebras; for
instance, coalgebras for the distribution functor are Markov chains.
In the generalized setting, the coupling-based construction is often
referred to as the \emph{Wasserstein} lifting or extension, and the
price-function-based one as the \emph{Kantorovich} lifting or
extension, but other names are found in the literature, such as
\emph{coupling-based}~\cite{HumeauEA25} and
\emph{codensity}~\cite{KomoridaEA19} lifting, respectively.
Both constructions are parametrized over a choice of
quantitative modalities; the classical case involves, on both sides,
only one modality, the expectation modality.

The interest in having a price-function-based presentation of a given functor
lifting or lax extension lies inter alia in the fact that one obtains
a quantitative Hennessy-Milner property for the quantitative modal
logic generated by the respective
modalities~\cite{km:bisim-games-logics-metric,FuzzyLaxHemi}. This
property states coincidence of the behavioural distance induced by the
given lifting or extension with the logical distance induced by the
respective quantitative modal logic; thus, high distance between two
states can always be \emph{certified} by means of a modal formula, a
prominent principle in the study of behavioural distances (e.g.~\cite{bw:behavioural-pseudometric,DesharnaisEA08,FuzzyLaxHemi,km:bisim-games-logics-metric,Forster_et_al:CSL.2023:Density} and more recently~\cite{rb:explainability-labelled-mc,TurkenburgEA26}).  It has been shown that
every metric functor lifting that preserves
isometries~\cite{KantorovichFunctors} and every quantitative lax
extension~\cite{FuzzyLaxHemi} \emph{is Kantorovich}, i.e.~can be
presented via the generalized price-function-based construction, using however
a rather large (in particular typically infinite) set of modalities
called the \emph{Moss modalities}.

An important open point that remains is thus the question of what we
term \emph{generalized Kantorovich-Rubinstein duality}: In which cases
does a coupling-based distance given by a choice of modalities for a
functor coincide with the price-function-based distance \emph{for the same
  modalities}? Known positive examples include, as mentioned, the
classical Kantorovich-Rubinstein distance on distributions, but also the
Hausdorff distance on the powerset. We begin our analysis by giving
an example of a natural coupling-based metric for which
generalized Kantorovich-Rubinstein duality in this sense fails, namely
the standard \emph{$p$-Wasserstein} metric for $p>1$ (which minimizes
$p$-th roots of the expectation of $p$-th powers of couplings). As our
main contribution, we then provide two new positive examples, namely
the \emph{L\'evy-Prokhorov} distance on probability
distributions~\cite{Prokhorov56} and the standard distance on the
\emph{convex powerset}, whose elements are convex sets of probability
distributions. The L\'evy-Prokhorov distance has seen a recent rise in
popularity due to its favourable robustness properties that make it
suitable for tasks in machine learning (such as conformal
prediction~\cite{AolariteiEA25} and corruption
resistance~\cite{BennounaEA23}); moreover, it has been shown in recent
work~\cite{DesharnaisSokolova26} to induce precisely the behavioural distance
defined by
$\epsilon$-bisimilarity~\cite{DesharnaisEA08}. In this case, the
relevant modality is precisely the \emph{generally} modality used in
work on fuzzy description logics as an alternative formal
correspondent of the natural-language term
`probably'~\cite{SchroderPattinson11}. The convex powerset plays a
central role in the distribution semantics of Markov decision
processes (or probabilistic
automata)~\cite{Bonchi_et_al:PowerConvexAlg}. Its standard metric is
just the composition of the Hausdorff and Kantorovich-Rubinstein metrics, and as
such is again given via the coupling-based construction for a modality
composed from the expectation modality and the standard fuzzy diamond
modality~\cite{FuzzyLaxHemi}. We show that the generalized
Kantorovich-Rubinstein duality holds w.r.t.~this modality. Beyond the
mentioned benefits regarding characteristic quantitative modal logics,
we demonstrate in this case that the new price-function-based description also
allows for more efficient computation of distances.

\subsubsection*{Related Work} 
Categorical distance constructions based on couplings were first considered in work on monoidal topology~\cite{Hofmann07,HofmannEA14}.
The categorical price-function-based construction is predated by several constructions for specific functors in work on stochastic games~\cite{GoubaultLarrecq08}.
Our price-function-based presentation of convex
powerset complements earlier results on a coupling-based
characterization~\cite{FuzzyLaxHemi} and a quantitative-algebraic
presentation~\cite{mv:monads-quantitative-equational}.
The modality underlying the distance on the convex powerset functor relates to Goubault-Larrecq's \emph{previsions}~\cite{GoubaultLarrecq07Previsions};  distances between such previsions are  also expressed through a combination of (topological versions of) the Hausdorff and Kantorovich-Rubinstein-distances~\cite{GoubaultLarrecq08,GoubaultLarrecq17,GoubaultLarrecqKRQM-I}.
There has been
recent work on what are termed \emph{correspondences} between the
generalized coupling-based and price-function-based constructions where a single
modality is assumed for the former, while an
associated set of modalities is considered for the latter~\cite{HumeauEA25}. Correspondences in this sense thus lie
between dualities as considered here, where we insist on the same
modalities being used on both sides, and general theorems on
Kantorovich presentations of lax extensions~\cite{FuzzyLaxHemi} and
classes of functors~\cite{KantorovichFunctors} that use very large
sets of liftings. Results are obtained for functors constructed from
the main known instances (distributions with the standard
Kantorovich-Rubinstein
metric, powerset) by applying coproduct and product. The price-function-based
construction is sometimes referred to as the \emph{codensity}
construction~\cite{KomoridaEA19,kkkrh:expressivity-quantitative-modal-logics,HumeauEA25},
and as such has been used for logical characterizations of behavioural
distances as mentioned
above~\cite{kkkrh:expressivity-quantitative-modal-logics} but also for
game characterizations~\cite{KomoridaEA19}. The problem of generalized
Kantorovich-Rubinstein duality has been stated already in work
introducing coalgebraic behavioural
distances~\cite{bbkk:coalgebraic-behavioral-metrics}, and a simple
counterexample has been given; the counterexample we give here is
distinguished by involving a quantitative modality that satisfies an
analogue of two-valued \emph{separation}~\cite{Pattinson04}.

% \begin{itemize}
%   \item We care about this both over pseudometrics and over fuzzy relations.
%     There are some differences here, e.g.\ in the Hausdorff case we have $\kant{\inf} = \wass{\inf}$ over pseudometrics and $\kant{\{\sup,\inf\}} = \wass{\inf}$ over fuzzy relations.
%   \item In all known cases where $\Lambda = \{\lambda\}$ we have that $\lambda$ is one of the Moss liftings. (Remark on this somewhere)
% \end{itemize}

\section{Preliminaries}
\label{sec:preliminaries}
\noindent We discuss preliminaries on (pseudo-)metric spaces and
coalgebras. Generally, we assume basic familiarity with category
theory~\cite{AdamekEA90}.

\subsubsection*{Metric spaces} We write
$\oplus, \ominus\colon [0,1] \times [0,1] \to [0,1]$ for truncated
addition and subtraction, i.e. $a \oplus b = \min\{1, a+b\}$ and
$a \ominus b = \max\{0, a-b\}$.  A \emph{($1$-bounded) pseudometric
  space} is a pair $(X, d_X)$, where $X$ is a set and
$d_X \colon X \times X \to \preal$ is a function, which for all
$x,y,z \in X$ is subject to the conditions of \emph{reflexivity}
$d_X(x, x) = 0$, \emph{triangle inequality}
$d_X(x,z)\leq d_X(x, y) + d_X(y, z)$ and \emph{symmetry}, that is
$d_X(x,y) = d_X(y,x)$.  A \emph{metric space} is then a pseudometric
space which is separated: If $d_X(x,y) = 0$ then $x = y$. The
\emph{Euclidean distance} $d_e(a, b) = |b-a|$ makes $\preal$ into a
metric space.

A function between the underlying sets $f\colon X \to Y$ of two
pseudometric spaces $(X, d_X)$, $(Y, d_Y)$ is \emph{nonexpansive} if
distances are not increased by $f$, explicitly if for all $x, y\in X$
it holds that $d_Y(f(x), f(y)) \leq d_X(x, y)$. Pseudometric spaces
and nonexpansive functions between them form a category $\pmet$. The
full subcategory of $\pmet$ spanned by metric spaces is denoted by
$\met$.

\subsubsection*{Coalgebra} Our main results derive some of their
interest from their relevance to behavioural distances in
coalgebras. Generally, the framework of \emph{universal
  coalgebra}~\cite{Rutten00} is based on abstracting state-based
systems as \emph{$F$-coalgebras} for a functor
$F\colon \catC \to \catC$ on a category~$\catC$, with~$F$ determining
the \emph{type} of the system. Specifically, an $F$-coalgebra is a
pair $(C, \gamma)$ consisting of a $\catC$-object $C$, thought of as
an object of \emph{states}, and a morphism $\gamma\colon C \to FC$
determining \emph{transitions} from states to structured collections
of successor states, with the structure determined by~$F$. A
\emph{homomorphism} between $F$-coalgebras $(C,\gamma)$ and
$(D,\delta)$ is a $\catC$-morphism $h \colon C \to D$ such that
$\delta \circ h = Fh \circ \gamma$.

We list some common functors that will be useful in the further technical development and their associated coalgebras.

\begin{example}\label{exmpl:common-functors}
  \begin{enumerate}[wide]
  \item The (covariant) \emph{powerset functor}
    $\pfun\colon \set\to \set$ sends each set to its powerset. On
    functions, $\pfun$ acts by taking images: For $A \in \pfun X$ and
    $f\colon X \to Y$ we have $\pfun f (A) = f[A]$. Its coalgebras are
    precisely sets equipped with a binary relation, i.e.~\emph{transition
      systems} or \emph{Kripke frames}. 
  \item The \emph{finitely supported probability distribution functor}
    $\dfun \colon \set \to \set$ sends a set $X$ to the set
    \[\dfun X = \{\mu \colon X \to [0,1] \mid \mu(x) > 0 \text{ for
        finitely many } x\in X \text{ and } \sum_{x\in X} \mu(x) =
      1\}.\] On a function $f\colon X \to Y$ the functor $\dfun$
    measures probabilities of preimages:
    \[\dfun f(\mu)(y) = \textstyle\sum_{x\in f^{-1}(y)}\mu(x).\]
    The coalgebras of~$\dfun$ are precisely (discrete-time) Markov chains.
  \end{enumerate}
\end{example}

\section{Dual Characterizations of Metrics}
\label{sec:duality}

A central question in the study of state-based systems at large is
whether two states exhibit the same behaviour. In universal coalgebra,
answers for this type of question are provided by such concepts as
Aczel-Mendler bisimulation or behavioural equivalence. When the
behaviour of states has quantitative aspects, however, such as
probabilistic transitions or outputs in a metric space, small
deviations in these quantities immediately render two states
behaviourally distinct under such two-valued notions. When one prefers
to retain the information that these states differ only slightly, an
established approach, discussed next, is to switch from behavioural
equivalence relations to the more robust concept of \emph{behavioural
  metrics}, equipping the state space with a pseudometric structure to
describe how dissimilar individual states are in their behaviour. 

A central role in the general coalgebraic treatment of behavioural
distances is played by the concept of a \emph{functor lifting}.

\begin{definition}
  Let $F\colon \set \to \set$ and $U\colon \catC \to \set$ be
  functors. A \emph{lifting} of~$F$ along~$U$ is a functor
  $\overline F\colon \catC \to \catC$ such that the following diagram
  commutes.
  \begin{equation*}
    \begin{tikzcd}
      \catC \ar[d, "U"] \ar[r, "\overline F"] & \catC \ar[d, "U"]\\
      \set \ar[r, "F"] & \set
    \end{tikzcd}
  \end{equation*}
\end{definition}

\noindent When $\catC$ is the category of pseudometric spaces and
$U$ is the forgetful functor, i.e.\ the functor that maps pseudometric spaces to their underlying sets, the fibres above any set $X$ (the
collection of pseudometric spaces carried by $X$) form a complete lattice
under the pointwise order; we denote this lattice by~$\catC_X$. Given an
$F$-coalgebra $(X, \gamma)$ and a functor lifting $\overline F$ we can
construct a monotone function $\bfunctional{\gamma}\colon \catC_X\to \catC_X$ on this
complete lattice, sending a pseudometric $d_X\colon X \times X \to \preal$
to $\bfunctional{\gamma}(d_X)$ given by
\begin{equation*}
  \bfunctional{\gamma}(d_X)(x, y) = d_{\overline F(X,d_X)}(\gamma(x), \gamma(y)).
\end{equation*}

\noindent The \emph{behavioural distance}~\cite{bbkk:coalgebraic-behavioral-metrics} on $(X, \gamma)$ is then defined as the
least fixpoint $\bmetric{\gamma}$ of the function
$\bfunctional{\gamma}$, which exists by the
Knaster-Tarski fixpoint theorem.

\begin{example}
  \begin{enumerate}[wide]
    \item The \emph{Hausdorff lifting} $\overline \pfun \colon \pmet \to \pmet$ equips the
      powerset $\pfun X$ of the carrier of a metric space $(X, d_X)$ with
      the Hausdorff metric $\dh(d_X)$. The distance of two subsets $A, B \in \pfun X$
      is then given by
      \[\dh(d_X)(A, B) := \max (\adjustlimits{\sup}_{x \in A}{\inf}_{y\in B} d_X(x,
        y),
    \adjustlimits{\sup}_{y \in B}{\inf}_{x\in A} d_X(x, y)) \]
  \item The \emph{Kantorovich-Rubinstein lifting} $\overline \dfun \colon \pmet
      \to \pmet$ of $\dfun$ equips the set $\dfun X$ with the
      \emph{Kantorovich-Rubinstein distance} $\dk(d_X)$, which is defined as
      \[\dk(d_X)(\mu, \nu) := \sup \{
      \expect{\nu}{f} - \expect{\mu}{f}
      \mid f\colon (X,d_X)\to([0,1],\de) \text{ nonexpansive}\},\]
  where $\expect{\mu}{f} = \sum_{x\in X} \mu(x)\cdot f(x)$ denotes the expected value of $f$ under $\mu$.
  \end{enumerate}
\end{example}

\noindent The examples above can be seen as instances of more general
constructions introduced below, which are parametric in a
$\set$-endofunctor and a set of $\preal$-valued predicate liftings.

\begin{definition}
  \begin{enumerate}[wide]
  \item A \emph{$\preal$-valued predicate lifting} for a functor
    $F \colon \set \to \set$ is a natural transformation of type
    $\lambda \colon \preal^{-} \To \preal^{F-}$. A predicate
    lifting~$\lambda$ is \emph{well-behaved} if the following
    conditions hold:
      \begin{itemize}
        \item \emph{Monotonicity}: If $f \leq g$, then $\lambda_X(f)
          \leq \lambda_X(g)$, where the ordering on functions is
          computed pointwise. 
        \item \emph{Subadditivity}: for $f, g \in \preal^X$, we
          have $\lambda_X(f \oplus g) \leq \lambda_X(f) \oplus \lambda_X(g)$,
          where the sum of two functions is calculated pointwise.
        \item \emph{Zero preservation}: $\lambda_X(0_X) = 0_{FX}$,
          where $0_X$, $0_{FX}$ are the constant zero functions on the
          respective sets.
      \end{itemize}

    \item Let $\lambda$ be a predicate lifting for $F$.
      The \emph{price-function-based lifting} of $F$ to the category of
      pseudometric spaces sends a metric $d_X$ to $K_\lambda(d_X)
      \colon FX \times FX \to \preal$ defined by
      \[K_\lambda(d_X)(s, t) := \sup \{|\lambda_X(f)(t) -
          \lambda_X(f)(s)| \mid f\colon (X,d_X)\to([0,1],\de) \text{ nonexp.}\}\]
      If $\Lambda$ is a set of predicate liftings, we put $\kantsym{\Lambda} = \sup_{\lambda\in\Lambda} \kantsym{\lambda}$ (pointwise).
    \item Let $s \in FX$ and $t\in FY$.
      The set of \emph{couplings} $\Gamma(s,t)$ is defined as
      \[\Gamma(s,t) :=\{ c \in F(X \times Y) \mid F\pi_1(c) = s \text{
      and } F\pi_2(c) = t\}.\]
\item Let $\lambda$ be a well-behaved predicate lifting for $F$ and
  assume that~$F$ preserves weak pullbacks. The \emph{coupling-based
    lifting} of $F$ to the category of pseudometric spaces is defined
  as the lifting that equips~$FX$ with~$W_\lambda(d_X)$ where
      \[W_\lambda(d_X)(s,t)= \inf \{ \lambda_{X\times X}(d_X)(c) \mid c \in \Gamma(s,t) \}.\]
  \end{enumerate}
\end{definition}

\noindent While the price-function-based lifting assumes no conditions on
supplied structures, the coupling-based lifting is significantly more
particular, requiring both pullback preservation of the underlying
functor and that the predicate lifting be well-behaved, to ensure that
$\wass{\lambda}(d_X)$ is a pseudometric whenever $d_X$ is
\cite{bbkk:coalgebraic-behavioral-metrics,bkp:up-to-behavioural-metrics-fibrations-journal}.

\begin{remark}
  It is well known that predicate liftings correspond to simple
  morphisms, sometimes dubbed \emph{evaluation functions}, by the Yoneda lemma
  \cite{Schroeder2007:ExpressivityCoalgModalLog}. In the concrete
  instance of $\preal$-valued predicate liftings, we have that natural
  transformations of the form $\lambda\colon \preal^- \To \preal^{F-}$
  are in bijection with morphisms of type
  $\mathsf{ev}_\lambda\colon F\preal \to \preal$. Then the condition
  of a predicate lifting being well-behaved translates roughly to the
  corresponding evaluation function being well-behaved
  \cite{bbkk:coalgebraic-behavioral-metrics,FuzzyLaxHemi}.\bknote{I
    would say that standard and well-behaved are the same thing, for a
    comparison, see
    \cite{bkp:up-to-behavioural-metrics-fibrations-journal}. Later in
    the paper we use ``well-behaved''.}
    \jfnote{Check
    if roughly is needed. I thought predicate liftings had slightly
  weaker requirements for well-behavedness?}
  % The choice between predicate
  % liftings and evaluation functions is largely a matter of preference;
  % the work below is phrased in terms of predicate liftings.
\end{remark}

\begin{example}
  \label{expl:kant-wass-instances}
  \begin{enumerate}
    \item Let $F = \pfun$ and $\lambda \colon \preal^X \to
      \preal^{\pfun -}$ be
      the natural transformation whose components calculate suprema of images: For $A \subseteq X$
      and $f\in \preal^X$ we define $\lambda_X(f)(A) = \sup f[A]$. Then $\dh
      = \kantsym{\lambda} = \wass{\lambda}$.
    \item \label{item:distribution-instance} Let $F = \dfun$ and let
      $\lambda$ be the predicate lifting calculating expected values:
      For $\mu \in \dfun X$ and $f\in \preal^X$ we have
      $\lambda_X(f)(\mu) = \expect{\mu}{f}$. Then $\dk =
      \kantsym{\lambda}= \wass{\lambda}$.
  \end{enumerate}
\end{example}

\noindent If, like in the two examples above, the categorical
price-function-based and coupling-based constructions $\kantsym{\lambda}$ and
$\wass{\lambda}$ coincide, we say that \emph{generalized
  Kantorovich-Rubinstein duality} holds.  This name is motivated by
the particular case of
\Cref{expl:kant-wass-instances}.\ref{item:distribution-instance}, the
classical \emph{Kantorovich-Rubinstein duality} that dates back to the
beginnings of transportation theory~\cite{Kantorovich39}. It is
important to note that generalized duality in this sense may fail, as
demonstrated by the case of $p$-Wasserstein distance that we discuss
later.  It is a general fact~\cite[Theorem
5.27]{bbkk:coalgebraic-behavioral-metrics} that
$\kantsym{\lambda} \leq \wass{\lambda}$, so duality hinges on the
inequality $\wass{\lambda} \leq \kantsym{\lambda}$. 

\subsubsection*{Transportation Theory}
The names of the two constructions are motivated by the probabilistic case as well.
The coupling-based presentation is closely related to optimal
transportation theory: In the case of the Kantorovich-Rubinstein
lifting, one may view probability distributions $\mu, \nu$ as producers
and consumers of a resource respectively (with the amount of the
resource produced/consumed being fixed to $1$). Now one wants to
transport the produced resources to the consumers in the economically
most efficient possible way, minimizing the average distance each unit
of resource needs to travel. Couplings $c$ of $\mu$ and $\nu$ can then be
seen as \emph{transport plans}, with $\expect{c}{d_X}$ giving the
total cost of carrying out the plan. Then the Kantorovich-Rubinstein
distance gives us by definition the minimal possible cost.

% \subsubsection*{Expressive Logics and Distinguishing Formulae}
% \label{sec:distinguishing}

% \pwnote[inline]{Discuss how duality results amount to providing an expressive logic (via the $\Lambda$) for a predicate lifting $\lambda$.
% It might be better to discuss this only after the main examples.
% }

In the distribution case
(\Cref{expl:kant-wass-instances}.\ref{item:distribution-instance}), the
price-function-based presentation can be explained via the analogy of
``outsourcing the transport'' by defining a function~$f$ that assigns
a price to each $x\in X$. This function must satisfy the requirement
that it is nonexpansive, i.e., that the difference of prices
assigned to $x,y$ is always at most $d_X(x,y)$ (meaning that no extra
profit can be made from such a transport). The overall profit under
such a price function is then the income obtained from the consumers
($\expect{\nu}{f}$) minus the cost paid to the producers
($\expect{\mu}{f}$). Taking the maximum over all such $f$ gives us the
value of the Kantorovich-Rubinstein lifting.

\subsubsection*{Expressive Logics}
The price-function-based lifting is closely related to characteristic
multi-valued modal logics: These are logics in which formulae~$\phi$
receive semantics in coalgebras $(X, \gamma)$, inducing an
interpretation function
\begin{equation*}
  \llbracket \phi\rrbracket_\gamma \colon X \to [0,1].
\end{equation*}
The logical
distance of two states is then the supremum of all distances witnessed
by such formulas.

The semantics of a modal operator $L$ in these types of logics is
usually given as a predicate lifting
$\lambda\colon [0,1]^-\To [0,1]^{F-}$, with
$\llbracket L \phi \rrbracket_\gamma$ being inductively defined as
$\lambda_X (\llbracket \phi \rrbracket_\gamma)\circ \gamma$. Then
\emph{expressivity} of the logic (the fact that the behavioural
distance can be witnessed by formulae of the logic arbitrarily
closely) can be shown by exploiting the price-function-based presentation of
the lifting. In fact, the interpretation
$\lambda_X (\llbracket \phi \rrbracket_\gamma)\circ \gamma$ can be
viewed as emulating one step of the functional $\bfunctional{\gamma}$,
under the condition that the interpretations of formulae
$\llbracket \phi \rrbracket_\gamma$ are able to approximate any
nonexpansive function $f\colon X \to [0,1]$ arbitrarily closely~\cite{Forster_et_al:CSL.2023:Density}.

Hence, the duality can be used very fruitfully% in cases  where the
% Wasserstein presentation can be computed more efficiently than the
% Kantorovich presentation
: One can use the coupling-based view to compute
(an under-approximation of) the behavioural distance and switch to
price functions to determine the distinguishing formula witnessing this
distance.  % For instance, this is the case for the original Kantorovich
% metric on distributions or also the L\'evy-Prokhorov distance studied
% in this paper.

\subsubsection*{The $p$-Wasserstein Distance}
% \label{sec:p-wasserstein}

There also exists a parametrized version of the Kantorovich-Rubinstein
distance, called the \emph{$p$-Wasserstein distance} for some real parameter ${p \ge 1}$.
In categorical terms, it is given by the predicate lifting $\lambda_p(f)(\mu) = (\expect{\mu}{f^p})^\frac{1}{p}$ (where $f^p$ takes the $p$-th power pointwise), that is:
\begin{equation*}
  \wass{\lambda_p}(d)(\mu,\nu) = \inf \{ (\expect{\rho}{d^p})^{\frac{1}{p}} \mid \rho \text{ is a coupling of $\mu$ and $\nu$} \}.
\end{equation*}
For $p=1$ this is just the usual Kantorovich-Rubinstein distance and duality holds.
If $p>1$, however, then duality may fail, and the corresponding price-function-based construction~$\kantsym{\lambda_p}$ may be strictly below $\wass{\lambda_p}$.
This gives witness to the general idea from transportation theory that $p=1$ constitutes a special case among the family of Wasserstein distances:

\begin{example}
  Let $p = 2$.
  Let $(X,d)$ be a two-element discrete metric space, that is $X = \{0,1\}$ and $d(0,1) = d(1,0) = 1$, and let $\mu = \frac{2}{3}\cdot 0 + \frac{1}{3}\cdot 1$ and $\nu = \frac{1}{3}\cdot 0 + \frac{2}{3}\cdot 1$. 
  Then we have $\kantsym{\lambda_p}(d)(\mu,\nu) \le \frac{1}{3} < \frac{1}{\sqrt{3}} = \wass{\lambda_p}(d)(\mu,\nu)$.
  \pwnote{Not sure if the proof should go in the main part.}
\end{example}
\begin{proof}
  We begin by showing the last equality.
  We note that $d^2 = d$, so that
  \begin{equation*}
    \wass{\lambda_2}(d)(\mu,\nu) = (\dk(d)(\mu,\nu))^\frac{1}{2} = (\tfrac{1}{3})^\frac{1}{2} = \tfrac{1}{\sqrt{3}},
  \end{equation*}
  where in the first step we used that $(-)^\frac{1}{2}$ is monotone and continuous.

  For the first inequality, let $f\colon X\to[0,1]$ be nonexpansive and put $a = f(0)$ and $b = f(1)$.
  We show that $\lambda_2(f)(\nu) - \lambda_2(f)(\mu) \le \frac{1}{3}$;
  the proof that $\lambda_2(f)(\mu) - \lambda_2(f)(\nu) \le \frac{1}{3}$ is analogous.
  We may assume wlog.\ that $a < b$, as otherwise the left hand side of our target inequality is nonpositive.
  Define $g\colon[0,1]\to\real$ via $g(t) = \sqrt{(1-t)a^2 + tb^2}$.
  Then we have
  $g(0) = a$,
  $g(\frac{1}{3}) = \lambda_2(f)(\mu)$,
  $g(\frac{2}{3}) = \lambda_2(f)(\nu)$ and
  $g(1) = b$.
  % Be careful here if negative values are also allowed.
  As the function $g$ is concave, we also have
  \begin{multline}\label{eq:g-concave}
    g(0) + g(1)
    = (\tfrac{1}{3}\cdot g(0) + \tfrac{2}{3}\cdot g(1)) + (\tfrac{2}{3}\cdot g(0) + \tfrac{1}{3}\cdot g(1)) \\
    \le g(\tfrac{1}{3}\cdot 0 + \tfrac{2}{3}\cdot 1) + g(\tfrac{2}{3}\cdot 0 + \tfrac{1}{3}\cdot 1)
    = g(\tfrac{1}{3}) + g(\tfrac{2}{3}).
  \end{multline}
  Additionally, nonexpansiveness of $f$ implies that
  \begin{equation}\label{eq:p-wass-helper}
    b^2 - a^2 = (b-a)(a+b) = (g(1)-g(0))(a+b) \le (1-0)(a+b) = a+b.
  \end{equation}
  Therefore we have:
  \begin{align*}
    (g(\tfrac{2}{3}) - g(\tfrac{1}{3}))\cdot (g(\tfrac{2}{3}) + g(\tfrac{1}{3}))
    &= g(\tfrac{2}{3})^2 - g(\tfrac{1}{3})^2 \\
    &= (\tfrac{1}{3}a^2 + \tfrac{2}{3}b^2) - (\tfrac{2}{3}a^2 + \tfrac{1}{3}b^2) \\
    &= \tfrac{1}{3}(b^2 - a^2) \\
    &\le \tfrac{1}{3}(g(0)+g(1)) &&\by{\ref{eq:p-wass-helper}} \\
    &\le \tfrac{1}{3}(g(\tfrac{2}{3}) + g(\tfrac{1}{3})). &&\by{\ref{eq:g-concave}}
  \end{align*}
  Our earlier assumption that $a < b$ implies that $g(\tfrac{2}{3}) + g(\tfrac{1}{3})$ is positive, so we can divide by it on both sides, which results in the claimed inequality.
  \qed
\end{proof}

\section{L\'evy-Prokhorov Distance}
\label{sec:levy-prokhorov}

The Lévy-Prokhorov distance provides an alternative to the
Kantorovich-Rubinstein distance when it comes to measuring the distance between probability distributions.
% Even though it is usually presented as a metric on some space of probability distributions, the construction also more generally applies to fuzzy relations.
If $(X,d)$ is a pseudometric space, and $\mu,\nu\in\dfun X$ are (discrete) probability distributions, then we define
\begin{equation*}
  \label{eq:lp-definition}
  \dlp(d)(\mu,\nu) = \inf \{
    \epsilon \mid \forall A\subseteq X.\; \mu(A) \le \nu(A^d_\epsilon) + \epsilon
  \},
\end{equation*}
where $A^d_\epsilon = \{y \in X \mid \inf_{x\in A} d(x,y) \le \epsilon \}$.
The definition of the Lévy-Prokhorov distance sometimes includes the
mirrored condition $\forall B\subseteq X.\; \nu(B) \le
\mu(B^d_\epsilon) + \epsilon$, but this second clause is redundant and
does not actually change the induced pseudometric.

\begin{toappendix}
  \begin{lemma}
    We have
    \begin{equation*}
      \dlp(d)(\mu,\nu) = \inf \{
        \epsilon \mid \forall A\subseteq X.\; \mu(A) \le \nu(A^d_\epsilon) + \epsilon \land \forall B\subseteq X.\; \nu(B) \le \mu(B^d_\epsilon) + \epsilon
        \}.
    \end{equation*}
  \end{lemma}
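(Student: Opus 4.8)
The inequality ``$\geq$'' is immediate: the predicate defining the two–clause infimum is a conjunction whose first conjunct is exactly the predicate defining $\dlp(d)(\mu,\nu)$, so every $\epsilon$ admissible for the two–clause version is admissible for the one–clause version, whence the infimum can only go up. The work is therefore entirely in showing that the one–clause condition already forces the mirrored clause \emph{for the same $\epsilon$}; once this is done the two sets of admissible $\epsilon$ coincide and hence so do the infima.

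So fix $\epsilon$ with $\mu(A) \le \nu(A^d_\epsilon) + \epsilon$ for all $A \subseteq X$, and fix $B \subseteq X$; the goal is $\nu(B) \le \mu(B^d_\epsilon) + \epsilon$. The key combinatorial step is to instantiate the hypothesis at the complement $A := X \setminus B^d_\epsilon$ and to observe the inclusion $A^d_\epsilon \subseteq X \setminus B$. To see this, take $z \in A^d_\epsilon$, so there is $a \in A$ with $d(a,z) \le \epsilon$; if we had $z \in B$, then by symmetry of $d$ we would get $\inf_{x \in B} d(x,a) \le d(z,a) = d(a,z) \le \epsilon$, i.e.\ $a \in B^d_\epsilon$, contradicting $a \in A = X \setminus B^d_\epsilon$. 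Hence $z \notin B$, proving the inclusion.

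Now plug $A$ into the hypothesis and use monotonicity of $\nu$ together with the inclusion just established:
\begin{equation*}
  \mu(X \setminus B^d_\epsilon) \;=\; \mu(A) \;\le\; \nu(A^d_\epsilon) + \epsilon \;\le\; \nu(X \setminus B) + \epsilon.
\end{equation*}
Since $\mu$ and $\nu$ are probability distributions, $\mu(X \setminus B^d_\epsilon) = 1 - \mu(B^d_\epsilon)$ and $\nu(X \setminus B) = 1 - \nu(B)$, so rearranging yields exactly $\nu(B) \le \mu(B^d_\epsilon) + \epsilon$. As $B$ was arbitrary, $\epsilon$ is admissible for the two–clause version, establishing ``$\leq$'' and hence the claimed equality.

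The only genuinely load–bearing point is the inclusion $A^d_\epsilon \subseteq X \setminus B$ for the complementary set $A$, and the bookkeeping with total mass $1$ when passing to complements; both are standard (this is essentially the classical symmetry argument for the Lévy–Prokhorov metric), so I do not expect a real obstacle. One should, however, be careful that the argument is run separately for each admissible $\epsilon$ (so that no ``$+\epsilon'$'' slack is introduced), and that the degenerate cases $A = \emptyset$ and $\epsilon \ge 1$ are covered, which they are since then the relevant inequalities hold trivially.
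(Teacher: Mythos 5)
Your proof is correct and takes essentially the same route as the paper's: instantiate the one--clause hypothesis at $A = X\setminus B^d_\epsilon$, use the inclusion $A^d_\epsilon\subseteq X\setminus B$, and rearrange via $\mu(X)=\nu(X)=1$. One small caveat (shared implicitly by the paper, which asserts the inclusion without proof): your justification of the inclusion assumes the infimum defining $A^d_\epsilon$ is attained (``there is $a\in A$ with $d(a,z)\le\epsilon$''), which can fail for infinite $X$ when every $a\in A$ satisfies $d(a,z)>\epsilon$ yet $\inf_{a\in A}d(a,z)=\epsilon$; this only affects boundary values of $\epsilon$ and is absorbed by the outer infimum, since the two--clause condition is then still easily verified for every $\epsilon'>\epsilon$.
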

  \begin{proof}
    Let $B\subseteq X$ and let $\epsilon \ge 0$, and assume that for all $A\subseteq X$ we have $\mu(A) \le \nu(A^d_\epsilon) + \epsilon$.
    It suffices to show that $\nu(B) \le \mu(B^d_\epsilon) + \epsilon$.
    Put $A = X \setminus B^d_\epsilon$.
    Then we have $A^d_\epsilon \subseteq X \setminus B$, so that
    \begin{equation*}
      1 - \mu(B^d_\epsilon)
      = \mu(A)
      \le \nu(A^d_\epsilon) + \epsilon
      \le \nu(X \setminus B) + \epsilon
      = 1 - \nu(B) + \epsilon,
    \end{equation*}
    and the claimed inequality follows by simplifying and rearranging. \qed
  \end{proof}
\end{toappendix}

The Lévy-Prokhorov distance has recently been investigated by Desharnais and Sokolova~\cite{DesharnaisSokolova26}, who prove that it is a functor lifting, but not a monad lifting, and that it characterizes the notion of $\epsilon$-bisimulation~\cite{DesharnaisEA08}.

The L\'evy-Prokhorov distance admits a representation based on couplings (cf.~\cref{sec:duality}), or equivalently in terms of pairs of (not necessarily independent) random variables that are distributed according to the given distributions, known as the \emph{Ky Fan metric}~\cite{DudleyRealAnalysis}.
The predicate lifting $\lambda$ underlying this representation is given by
\begin{equation}
  \label{eq:lp-pred-lifting-1}
  \lambda_X(f)(\mu) = \inf \{\epsilon \ge 0 \mid \mu(\{x\in X \mid
    f(x) > \epsilon\}) \le \epsilon\},
\end{equation}
and using this predicate lifting we have $\dlp(d) = \wass{\lambda}(d)$ for every pseudometric~$d$, explicitly:
\begin{multline*}
  \dlp(d)(\mu,\nu) = \inf \{ \inf \{\epsilon \ge 0 \mid \rho(\{(x,y)\in X\times Y \mid
    d(x,y) > \epsilon\}) \le \epsilon\} \\ \mid \rho\in\Gamma(\mu,\nu) \}.
\end{multline*}

\noindent
The predicate lifting from~\eqref{eq:lp-pred-lifting-1} has independently been used under the name `\emph{generally}' in work on fuzzy description logics~\cite{SchroderPattinson11}, and it admits a number of equivalent representations.
Intuitively, all of these representations amount to the statement that the value $\lambda_X(f)(\mu)$ is high if the value of $f$ is high with high probability when sampling according to the distribution $\mu$.
\begin{lemmarep}\label{lem:lp-pred-lifting}
  Let $X$ be a set, let $f\colon X\to[0,1]$, and let $\mu\in\dfun X$.
  Then we have:
  \begin{enumerate}
    \item $\lambda_X(f)(\mu) = \inf_{\epsilon \ge 0} \max (\mu(\{x\in X\mid
    f(x) > \epsilon\}), \epsilon)$ \label{item:lp-pred-lifting-2}
    \item $\lambda_X(f)(\mu) = \sup_{\epsilon \ge 0} \min (\mu(\{x\in X\mid f(x) > \epsilon\}), \epsilon)$ \label{item:lp-pred-lifting-3}
    \item $\lambda_X(f)(\mu) = \sup \{\epsilon \ge 0 \mid \mu(\{x\in X\mid f(x) > \epsilon\}) \ge \epsilon \}$ \label{item:lp-pred-lifting-4}
  \end{enumerate}
  All these identities, and also~\eqref{eq:lp-pred-lifting-1}, remain true if $f(x) > \epsilon$ is replaced by $f(x) \ge \epsilon$.
\end{lemmarep}
\begin{proof}
  % We first show \cref{item:lp-pred-lifting-2}.
  % For `$\ge$', let $\epsilon>0$ such that
  % $\mu(\{x\in X \mid f(x) > \epsilon\}) \le \epsilon$. Then
  % $\max (\mu(\{x\in X\mid f(x) > \epsilon\}), \epsilon)=\epsilon$,
  % so~$\epsilon$ occurs also in the second infimum. For `$\le$', let
  % $\epsilon>0$. We distinguish cases over the maximum
  % $m=\max (\mu(\{x\in X\mid f(x) > \epsilon\}), \epsilon)$. If
  % $m=\epsilon$, then
  % $\mu(\{x\in X \mid f(x) > \epsilon\}) \le \epsilon$, so $m=\epsilon$
  % occurs also in the first infimum. Otherwise,
  % $m=\mu(\{x\in X \mid f(x) > \epsilon\}) \ge \epsilon$, so
  % \begin{equation*}
  %   \mu(\{x\in X \mid f(x) > m\})\le \mu(\{x\in X \mid f(x) >
  %   \epsilon\}) = m,
  % \end{equation*}
  % so~$m$ occurs also in the first infimum.

  All four representations of $\lambda$ are different ways of expressing the position of the unique crossing point between the monotone function $\epsilon\mapsto\epsilon$ and the anti-monotone function $\epsilon\mapsto\mu(\{x\in X\mid f(x) > \epsilon\})$.
  The latter function is stepwise constant, with the constant parts being half-open intervals that are closed on the left.
  If we replace $f(x) > \epsilon$ by $f(x) \ge \epsilon$, then the the half-open intervals are instead closed on the right, so $y$-coordinate of the crossing point may change, but its $x$-coordinate remains the same. \qed
\end{proof}

\noindent
It follows from \cref{lem:lp-pred-lifting} that $\lambda$ is \emph{self-dual} (equal to its own dual):
\begin{lemmarep}
  \label{lem:lp-pred-lifting-self-dual}
  For every $X$, $f$ and $\mu$ we have $\lambda_X(f)(\mu) = 1 - \lambda_X(1-f)(\mu)$.
\end{lemmarep}
\begin{proof}
  As the values of $f$ are bounded within $[0,1]$, we can also restrict any suprema and infima in the following to be taken inside $[0,1]$.
  With this observation in mind, we have
  \begin{align*}
    1 - \lambda_X(1-f)(\mu)
    &= 1 - \inf_{0 \le \epsilon \le 1} \max(\mu(\{x\in X \mid 1-f(x) > \epsilon\}),\epsilon) \\
    &= \sup_{0 \le \epsilon \le 1} \min(1-\mu(\{x\in X \mid 1-f(x) > \epsilon\}),1-\epsilon) \\
    &= \sup_{0 \le \epsilon \le 1} \min(\mu(\{x\in X \mid 1-f(x) \le \epsilon\}),1-\epsilon) \\
    &= \sup_{0 \le \epsilon \le 1} \min(\mu(\{x\in X \mid f(x) \ge 1-\epsilon\}),1-\epsilon) \\
    &= \sup_{0 \le \epsilon \le 1} \min(\mu(\{x\in X \mid f(x) \ge \epsilon\}),\epsilon) = \lambda_X(f)(\mu).\tag*{\qed}
  \end{align*}
\end{proof}

% \pwnote{Possibly add proof that every $\lambda(f)$ is quasiconvex; though we don't use this fact anywhere.}

\noindent
The coupling-based representation above is justified by the fact that the predicate lifting is well-behaved:

\begin{lemmarep}
  \label{lem:lp-pred-lifting-wb}
  The predicate lifting $\lambda$ as per~\eqref{eq:lp-pred-lifting-1}
  is well-behaved.
\end{lemmarep}
\begin{proof}
  Monotonicity is clear from the definition.
  For preservation of the zero function, note that $\mu(\{x \in X \mid 0_X(x) > \epsilon\}) = \mu(\emptyset) = 0$ for every $\epsilon \ge 0$, so that $\lambda_X(0_X)(\mu) = 0$ as required.
  For subadditivity, let $f,g\colon X\to[0,1]$ and $\mu\in\dfun X$.
  We have
  \begin{align*}
    &\lambda_X(f \oplus g)(\mu) \\
    &= \inf_{\epsilon\ge 0} \max(\mu(\{x\in X \mid f(x) \oplus g(x) > \epsilon\}),\epsilon) \\
    &= \inf_{\epsilon_1,\epsilon_2\ge 0} \max(\mu(\{x\in X \mid f(x) \oplus g(x) > \epsilon_1+\epsilon_2\}),\epsilon_1+\epsilon_2) \\
    &\le \inf_{\epsilon_1,\epsilon_2\ge 0} \max(\mu(\{x\in X \mid f(x) > \epsilon_1\}) + \mu(\{x\in X \mid g(x) > \epsilon_2\}),\epsilon_1+\epsilon_2) \\
    &\le \inf_{\epsilon_1,\epsilon_2\ge 0} \max(\mu(\{x\in X \mid f(x) > \epsilon_1\}),\epsilon_1) + \max(\mu(\{x\in X \mid g(x) > \epsilon_2\}),\epsilon_2) \\
    &= \lambda_X(f)(\mu) + \lambda_X(g)(\nu),
  \end{align*}
  where in the first inequality we used that $f(x) \oplus g(x) > \epsilon_1 + \epsilon_2$ implies that $f(x) > \epsilon_1$ or $g(x) > \epsilon_2$.
  Because we also have $\lambda_X(f\oplus g)(\mu) \le 1$ already from the definition of $\lambda$, this shows that
  $\lambda_X(f \oplus g)(\mu) \le \lambda_X(f)(\mu) \oplus \lambda_X(g)(\nu)$.
  \qed
\end{proof}

\subsection{Duality}

Next, we show that the Lévy-Prokhorov distance admits a dual
representation using the same predicate lifting $\lambda$, that is, we
have $\kantsym{\lambda} = \wass{\lambda}$.  We prove this duality in
the more general setting where the two constructions apply to fuzzy
relations that need not be pseudometrics.  Recall that a \emph{fuzzy
  relation} $r\colon X \relto Y$ between sets $X$ and $Y$ is a
function $r\colon X \times Y \to \preal$.  The coupling-based
construction applies to fuzzy relations the same way it does to
pseudometrics, while the price-function-based construction is defined in terms
of pairs of functions that satisfy a nonexpansiveness condition with
respect to the given fuzzy relation.  They are therefore both examples
of \emph{(fuzzy) relational liftings} or \emph{relators}
(e.g.~\cite{GoncharovEA25} and references therein), as they lift fuzzy
relations of type $X\relto Y$ to relations of type $FX\relto FY$:
\begin{definition}
  Let $\lambda$ be a monotone predicate lifting for a set functor $F$, and let $r\colon X\relto Y$.
  
  \begin{enumerate}
  \item The \emph{relational coupling-based lifting}
    $\wass{\lambda}(r)\colon FX\relto FY$ is defined as
      \begin{equation*}
        \wass{\lambda}(r)(s,t) = \inf \{ \lambda_{X\times Y}(r)(c)\mid c \in \Gamma(s,t)\}
      \end{equation*}
      for every $s\in FX$ and $t\in FY$.
    \item An \emph{$r$-nonexpansive pair} is a pair of functions $(f,g)$ where $f\colon X \to \preal$, $g\colon Y \to \preal$ and $g(y)-f(x) \leq r(x,y)$ for all $x\in X$ and $y\in Y$.
    \item The \emph{relational price-function-based lifting} $\kantrel{\lambda}\colon FX\relto FY$ is defined as
      \begin{equation*}
        \kantrel{\lambda}(r)(s, t) = \sup\{ \lambda_Y(g)(t) \ominus \lambda_X(f)(s) \mid (f,g)\text{ $r$-nonexpansive}\}
      \end{equation*}
      for every $s\in FX$ and $t\in FY$.
      Additionally, put $\kantrel{\Lambda} = \sup_{\lambda\in\Lambda} \kantrel{\lambda}$ if $\Lambda$ is a set of predicate liftings.
  \end{enumerate}
\end{definition}

\noindent
Both of these constructions satisfy certain laws (that we will not restate here) making them \emph{lax extensions}.
Wild and Schröder~\cite{FuzzyLaxHemi} give results that relate $\kantrel{}$ to its pseudometric counterpart.
The most relevant consequence of these results for our purposes is the following:
\begin{lemmarep}\label{lem:self-dual-rel-equals-sym}
  If $\lambda$ is a self-dual predicate lifting, then $\kantrel{\lambda}(d) = \kantsym{\lambda}(d)$ for every pseudometric $d$.
\end{lemmarep}
\begin{proof}
  This is an immediate consequence of~\cite[Lemma 5.10]{FuzzyLaxHemi}, because self-duality of $\lambda$ implies that the singleton set $\{\lambda\}$ is closed under duals. \qed
\end{proof}

\noindent
Out of the two representations of the Lévy-Prokhorov distance discussed earlier, the second, being based on couplings, readily generalizes to fuzzy relations.
Therefore, we define the \emph{relational Lévy-Prokhorov lifting} $\dlp$ to be the assignment that maps each fuzzy relation $r\colon X\relto Y$ to $\dlp(r) = \wass{\lambda}(r)\colon \dfun X\relto\dfun Y$.

We discussed in~\cref{sec:duality} that the inequality `$\le$' follows from the general theory of coupling-based and price-function-based liftings.
The same is true for the respective lax extensions~\cite[Lemma 5.22]{PaulWildThesis}, so that it suffices to prove the converse inequality `$\ge$'.
In the proof of the classical Kantorovich-Rubinstein duality (e.g~\cite[Theorem 5.10]{v:optimal-transport}), this direction amounts to, given an optimal transport plan in the shape of a coupling between the distributions at hand, constructing two price functions that correspond to the optimal cost, in the sense that they form a nonexpansive pair that witnesses the supremum in the definition of the relational price-function-based lifting $K_\expectSymb$.
In our proof of Lévy-Prokhorov duality we use a similar approach, which means that we should first understand how to phrase computation of the Lévy-Prokhorov distance in terms of a transport problem.

Let $r\colon X\relto Y$ and let $\mu\in\dfun X$ and $\nu\in\dfun Y$.
The coupling-based representation $\dlp(r)(\mu,\nu) = \wass{\lambda}(r)(\mu,\nu)$ can be rewritten for this purpose.
For $\epsilon\ge 0$, define $r^\epsilon(x,y) = 0$ if $r(x,y) < \epsilon$ and $r^\epsilon(x,y) = 1$ otherwise.
Then $\rho(\{(x,y) \mid r(x,y) \ge \epsilon\}) = \expect{\rho}{r^\epsilon}$ for every $\rho\in\dfun(X\times Y)$.
We may now use the representation of $\lambda$ from~\cref{lem:lp-pred-lifting}.\ref{item:lp-pred-lifting-2}, replacing the strict inequality with a non-strict one, and then swap the infimum over couplings inside to obtain
\begin{align*}
  \wass{\lambda}(r)(\mu,\nu)
  = \inf \{ \inf_{\epsilon \ge 0} \max(\epsilon,\expect{\rho}{r^\epsilon}) \mid \rho\in\Gamma(\mu,\nu) \}
  = \inf_{\epsilon \ge 0} \max(\epsilon,\wass{\expectSymb}(r^\epsilon)(\mu,\nu)).
\end{align*}
This means that Lévy-Prokhorov distance is determined by the solutions to the transport problems for the $r^\epsilon$.
As each such $r^\epsilon$ is a crisp relation (i.e.~only has $0$ and $1$ entries), the optimal price functions can be made crisp as well:
\begin{lemmarep}\label{lem:01-price-functions}
  Let $r\colon X\times Y\to \{0,1\}$ be a crisp relation and let $\mu\in\dfun X$, $\nu\in\dfun Y$.
  Then there exist functions $f\colon X\to\{0,1\}$ and $g\colon Y\to\{0,1\}$ such that $(f,g)$ is an $r$-nonexpansive pair and $\expect{\nu}{g} - \expect{\mu}{f} \ge W_{\mathbb{E}}(r)(\mu,\nu)$.
\end{lemmarep}
\begin{proof}
  % \pwnote{lower semicontinuity}
  We use~\cite[Theorem 5.10(ii)]{v:optimal-transport}, which guarantees that there exists an optimal coupling $\rho\in\dfun(X\times Y)$, as well as price functions $f\colon X\to\real$, $g\colon\ Y\to\real$ such that $(f,g)$ is $r$-nonexpansive, $\expect{\nu}{g} - \expect{\mu}{f} = \expect{\rho}{r}$, and moreover, if $U = \{(x,y) \mid g(y)-f(x) = r(x,y)\}$, then $\rho(U) = 1$ and the set $U$ is \emph{$r$-cyclically monotone}, meaning that for any $(x_1,y_1),\dots,(x_n,y_n)\in U$ we have
  \begin{equation*}
    \sum_{i=1}^n r(x_i,y_i) \le \sum_{i=1}^n r(x_i,y_{i+1}),
  \end{equation*}
  where $y_{n+1} = y_1$.
  Consider the graph on $X+Y$ with edges given by $U$.
  We can modify the values of $f$ and $g$ as follows to make them binary:
  If $x\in X$ is an isolated vertex of the graph, put $f(x) = 1$.
  Similarly, if $y\in Y$ is an isolated vertex of the graph, put $g(y) = 0$.
  For all other vertices, $r$-cyclic monotonicity guarantees that the function values are at most $1$ apart.
  This is because if there are two non-isolated vertices whose values are more than $1$ apart, then there must also be such vertices where $x\in X$, $y\in Y$ and $g(y) - f(x) > 1$ (if need be, we can pass from the relevant vertices to the other side using some edge in $U$).
  But then we can pick an incident edge for both $x$ and $y$, and this pair of edges violates $r$-cyclic monotonicity.
  As the function values are within $1$ of each other, we can add some constant to all function values of these vertices so that they are all $0$ and $1$.
  These modifications do not change the difference $\expect{\nu}{g} - \expect{\mu}{f}$, because the condition $\rho(U) = 1$ implies that all vertices in the supports of $\mu$ and $\nu$ are non-isolated, so that the same total gets added to both expected values and thus cancels out. \qed
\end{proof}

\begin{toappendix}
  
\begin{lemma}\label{lem:lift-two-valued-predicate}
  Let $X$ be a set, let $\mu\in\dfun X$ and let $f\colon X\to[0,1]$ be a map such that $f[X] = \{a,b\}$, where $a < b$.
  Then $\lambda(f)(\mu) = \min(b, \max(a, \mu(f^{-1}(b))))$.
\end{lemma}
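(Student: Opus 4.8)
The plan is to reduce to the $\inf$-$\max$ representation $\lambda_X(f)(\mu) = \inf_{\epsilon \ge 0} \max(\mu(\{x\in X\mid f(x) > \epsilon\}), \epsilon)$ supplied by \cref{lem:lp-pred-lifting}.\ref{item:lp-pred-lifting-2}, and then to exploit that $f$ takes only the two values $a < b$ in order to compute the inner function $h(\epsilon) = \mu(\{x\in X\mid f(x) > \epsilon\})$ explicitly. Writing $p = \mu(f^{-1}(b))$, one reads off directly from $f[X] = \{a,b\}$ that $h(\epsilon) = 1$ for $\epsilon\in[0,a)$ (every point satisfies $f(x) > \epsilon$ there), $h(\epsilon) = p$ for $\epsilon\in[a,b)$ (only the points of $f^{-1}(b)$ do), and $h(\epsilon) = 0$ for $\epsilon\in[b,\infty)$ (no point does).

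Next I would split the infimum over $\epsilon\ge 0$ along these three intervals. Over $[0,a)$ — which is nonempty precisely when $a>0$ — we get $\max(1,\epsilon) = 1$; over $[b,\infty)$ we get $\inf_{\epsilon\ge b}\max(0,\epsilon) = b$; and over $[a,b)$ we get $\inf_{\epsilon\in[a,b)}\max(p,\epsilon)$. Since $b\le 1$, the value $1$ coming from the first interval is always dominated (and is simply absent when $a=0$), so in all cases $\lambda_X(f)(\mu) = \min\big(\inf_{\epsilon\in[a,b)}\max(p,\epsilon),\ b\big)$.

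It then remains to evaluate this by a case distinction on the position of $p$. If $p\le a$, then $\max(p,\epsilon)=\epsilon$ on $[a,b)$, the infimum is $a$, and the result is $\min(a,b) = a = \min(b,\max(a,p))$. If $a\le p\le b$, then the infimum over $[a,b)$ is $p$ (attained at $\epsilon=a$ when $p=a$, and at any $\epsilon\in[a,p]$ otherwise), so the result is $\min(p,b) = p = \min(b,\max(a,p))$. If $p\ge b$, then $\max(p,\epsilon)=p$ throughout $[a,b)$, the infimum is $p$, and the result is $\min(p,b) = b = \min(b,\max(a,p))$. In every case the value equals $\min(b,\max(a,\mu(f^{-1}(b))))$, which is the claim.

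The computation is routine; the only points requiring a little care are the degenerate case $a=0$, where the interval $[0,a)$ is empty, and the observation that when $p>b$ the infimum over $[a,b)$ equals $p$, not $b$ — so it is the contribution of the tail $[b,\infty)$ that correctly caps the value at $b$. Taking the overall minimum makes both of these bookkeeping issues vanish, so I do not expect a genuine obstacle.
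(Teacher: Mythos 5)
Your proof is correct and takes essentially the same route as the paper's: both compute the step function $\epsilon \mapsto \mu(\{x \mid f(x) > \epsilon\})$ explicitly and split the infimum from \cref{lem:lp-pred-lifting}.\ref{item:lp-pred-lifting-2} over the three intervals $[0,a)$, $[a,b)$, $[b,\cdot\,]$, obtaining the minimum of $1$, the middle-interval infimum, and $b$. The only cosmetic difference is that the paper evaluates $\inf_{\epsilon\in[a,b)}\max(\epsilon,p)$ directly as $\max(a,p)$ by monotonicity in $\epsilon$, whereas you reach the same value via a case split on the position of $p$.
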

\begin{proof}
  First we note that for every $0 \le \epsilon \le 1$ we have:
  \begin{equation*}
    \mu(\{x \mid f(x) > \epsilon\}) =
    \begin{cases}
      1, & \text{if } 0 \le \epsilon < a \\
      \mu(f^{-1}(b)), & \text{if } a \le \epsilon < b \\
      0, & \text{if } b \le \epsilon \le 1
    \end{cases}
  \end{equation*}
  Therefore,
  \begin{align*}
    \lambda(f)(\mu) &= \inf_{0 \le \epsilon \le 1} \max(\epsilon, \mu(\{x \mid f(x) > \epsilon\})) \\
      &= \min(
        \inf_{0 \le \epsilon < a} \max(\epsilon, 1),
        \inf_{a \le \epsilon < b} \max(\epsilon, \mu(f^{-1}(b))),
        \inf_{b \le \epsilon \le 1} \max(\epsilon, 0)
      ) \\
      &= \min(1, \max(a,  \mu(f^{-1}(b))), b) \\
      &= \min(b, \max(a, \mu(f^{-1}(b)))) \tag*{\qed}
  \end{align*}
\end{proof}

\end{toappendix}

\noindent
This allows us to establish duality:

\begin{theoremrep}\label{thm:levy-prokhorov-duality-relational}
  For every $r\colon X\relto Y$ and every $\mu\in\dfun X$, $\nu\in\dfun Y$,
  \begin{equation*}
    \dlp(r)(\mu,\nu) = \kantrel{\lambda}(r)(\mu,\nu) = \wass{\lambda}(r)(\mu,\nu).
  \end{equation*}
\end{theoremrep}
\begin{proofsketch}
  As mentioned before, we only need to prove 
  $\wass{\lambda}(r) \le \kantrel{\lambda}(r)$.
  Assume $\epsilon < \wass{\lambda}(r)(\mu,\nu)$.
  Then there is an $r^\epsilon$-nonexpansive pair $(p,q)$ of crisp price functions witnessing the transport cost wrt.\ $r^\epsilon$.
  One then replaces the function values $0$ and $1$ by $\expect{\mu}{p}$ and $\expect{\mu}{p} + \epsilon$ to arrive at an $r$-nonexpansive pair $(f,g)$.
  Using this pair we show
  \begin{equation*}
    \kantrel{\lambda}(r)(\mu,\nu) \ge \lambda_Y(g)(\nu) - \lambda_X(f)(\mu) \ge (\expect{\mu}{p}+\epsilon) - \expect{\mu}{p} = \epsilon.\tag*{\qed}
  \end{equation*}
\end{proofsketch}
\begin{proof}
  Assume that $\dlp(r)(\mu,\nu) > 0$, and let $\epsilon > 0$ such that $\epsilon < \dlp(r)(\mu,\nu)$.
  We need to find an $r$-nonexpansive pair $(f,g)$ such that $\lambda(g)(\nu) - \lambda(f)(\mu) \ge \epsilon$.

  Using the optimal-transport representation of $\dlp(r)$ derived earlier, it follows that $\epsilon < \wass{\expectSymb}(r^\epsilon)(\mu,\nu)$.
  This is easiest understood using contraposition: if $\epsilon \ge \wass{\expectSymb}(r^\epsilon)(\mu,\nu)$, then $\epsilon \ge \dlp(r)(\mu,\nu)$.
  Hence there exists an $r^\epsilon$-nonexpansive pair $(p,q)$ such that $\expect{\nu}{q} - \expect{\mu}{p} \ge \epsilon$, where w.l.o.g.\ we may assume that $p$ and $q$ only take on the values $0$ and $1$ by~\cref{lem:01-price-functions}.

  Put $a = \expect{\mu}{p}$ and $b = \expect{\nu}{q}$.
  Now we define $f(x) = a$ whenever $p(x) = 0$ and $f(x) = a+\epsilon$ otherwise.
  Similarly we define $g(y) = a$ whenever $q(y) = 0$ and $g(y) = a+\epsilon$ otherwise.

  The pair $(f,g)$ is $r$-nonexpansive:
  The only relevant case is where $f(x) = a$ and $g(y) = a+\epsilon$, as in all other cases $g(y) - f(x) \le 0 \le r(x,y)$ trivially holds.
  In this case we have $p(x) = 0$ and $q(y) = 1$ and thus $1 = q(y) - p(x) \le r^\epsilon(x,y)$.
  By definition of $r^\epsilon$ this implies $g(y) - f(x) = \epsilon \le r(x,y)$, as required.

  It remains to show that $\lambda(g)(\nu) - \lambda(f)(\mu) \ge \epsilon$.
  To evaluate $f$ and $g$ under the predicate lifting, we use~\cref{lem:lift-two-valued-predicate}.
  Note that by definition of $f$ and $g$ we have that $\mu(f^{-1}(a+\epsilon)) = \mu(p^{-1}(1)) = \expect{\mu}{p} = a$ and similarly $\nu(g^{-1}(a+\epsilon)) = \expect{\nu}{q} = b$.
  Therefore, using the lemma, we obtain
  $\lambda(f)(\mu) = \min(a+\epsilon,\max(a,\mu(f^{-1}(a+\epsilon)))) = \min(a+\epsilon,\max(a,a)) = a$ and $\lambda(g)(\nu) = \min(a+\epsilon,\max(a,b)) = a+\epsilon$, concluding the proof. \qed
\end{proof}

\begin{theorem}\label{thm:levy-prokhorov-duality}
  For every pseudometric $d$ we have $\dlp(d) = \kantsym{\lambda}(d) = \wass{\lambda}(d)$.
\end{theorem}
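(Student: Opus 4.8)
The plan is to obtain this pseudometric-level statement as a direct corollary of the relational duality already established in \Cref{thm:levy-prokhorov-duality-relational}, combined with the self-duality of the modality $\lambda$. First I would observe that any pseudometric $d\colon X\times X\to\preal$ is in particular a fuzzy relation $d\colon X\relto X$, so instantiating \Cref{thm:levy-prokhorov-duality-relational} at $r=d$ immediately yields $\dlp(d)(\mu,\nu) = \kantrel{\lambda}(d)(\mu,\nu) = \wass{\lambda}(d)(\mu,\nu)$ for all $\mu,\nu\in\dfun X$. This already gives the equality $\dlp(d)=\wass{\lambda}(d)$ of the statement; what remains is to identify the relational Kantorovich lifting $\kantrel{\lambda}(d)$ appearing in the middle with the symmetric Kantorovich lifting $\kantsym{\lambda}(d)$ that the theorem actually refers to.

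For this second step I would invoke \Cref{lem:lp-pred-lifting-self-dual}, which states that $\lambda$ is self-dual, and then \Cref{lem:self-dual-rel-equals-sym}, which says precisely that for a self-dual predicate lifting one has $\kantrel{\lambda}(d) = \kantsym{\lambda}(d)$ on every pseudometric $d$. Chaining the two displays then gives $\dlp(d) = \kantrel{\lambda}(d) = \kantsym{\lambda}(d) = \wass{\lambda}(d)$, which is the claim. (Note that along the way this also shows $\kantrel{\lambda}(d)$ is symmetric, consistent with it equalling $\kantsym{\lambda}(d)$.)

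The step I expect to be the genuine content — though it is already fully discharged by the cited results and so requires no new work here — is the passage between the \emph{asymmetric} relational Kantorovich lifting, built from $r$-nonexpansive \emph{pairs} $(f,g)$ and truncated subtraction $\ominus$, and the \emph{symmetric} Kantorovich lifting, built from a single nonexpansive function $f$ and ordinary absolute difference. Without any hypothesis one only gets $\kantsym{\lambda}(d)\le\kantrel{\lambda}(d)$; closing this gap is exactly what self-duality of $\lambda$ buys, since it makes the singleton $\{\lambda\}$ closed under duals and thus makes \cite[Lemma 5.10]{FuzzyLaxHemi} (via \Cref{lem:self-dual-rel-equals-sym}) applicable. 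Given that machinery, the proof of the final theorem is a two-line assembly of \Cref{thm:levy-prokhorov-duality-relational}, \Cref{lem:lp-pred-lifting-self-dual}, and \Cref{lem:self-dual-rel-equals-sym}.
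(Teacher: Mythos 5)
Your proposal is correct and follows exactly the paper's own argument: the paper likewise derives the statement from \Cref{thm:levy-prokhorov-duality-relational} instantiated at $r=d$ and then identifies $\kantrel{\lambda}(d)$ with $\kantsym{\lambda}(d)$ via \Cref{lem:lp-pred-lifting-self-dual} and \Cref{lem:self-dual-rel-equals-sym}. No differences worth noting.
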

\begin{proof}
  We only need to show that $\kantrel{\lambda}(d) = \kantsym{\lambda}(d)$,
  which follows by~\cref{lem:lp-pred-lifting-self-dual,lem:self-dual-rel-equals-sym}. \qed
\end{proof}

\section{Convex Powerset Functor}
\label{sec:convex-powerset}

We will next tackle duality for the case of the convex powerset
functor, a functor that has been studied in-depth for modelling
systems combining probability and non-determinism
(e.g.,~\cite{Bonchi_et_al:PowerConvexAlg}).  A non-empty set
$D\subseteq \dfun X$ of probability distributions is \emph{convex} if
for all $\mu_1,\mu_2\in D$ it also holds that
$\mu_1 +_p \mu_2 := p\cdot \mu_1 + (1-p)\cdot \mu_2\in D$ (where
$p\in[0,1]$).  For a set~$X$, we define
\begin{equation*}
  \cfun X = \{\emptyset \neq D\subseteq \dfun X \mid \text{$D$ is
      convex} \}.
\end{equation*}
Of course, we have $\cfun X \subseteq \pfun\dfun X$ for every set $X$.
In fact, it is easily verified that every map $\pfun\dfun f$ preserves convex sets, so that we obtain a subfunctor~$\cfun$ of the composite functor $\pfun\dfun$.

A straightforward -- but futile -- approach to prove duality for
$\cfun$ would be to observe that it holds for the powerset and
distribution functor, and then apply a compositionality
result. However, the studied liftings (price-function-based and coupling-based) are quite fragile when it comes to compositionality, i.e.\ it
does not hold in general that the composition of liftings of functors
$F$, $G$ based on the predicate liftings $\lambda^F$, $\lambda^G$ is
the lifting of the composite $FG$ (based on the obvious combined
modality $\lambda^{FG}_X = \lambda^F_{GX}\circ \lambda^G_X$)
\cite{bbkk:coalgebraic-behavioral-metrics}. While it is known that the
coupling-based lifting of the convex powerset functor arises by combining
the coupling-based liftings of the component functors
\cite{FuzzyLaxHemi}, this is incorrect for the price-function-based
lifting~\cite{dgkknrw:behavioural-metrics-compositionality}. In fact,
the given counterexample uses a set that is \emph{not} convex, thus
suggesting that the problem might disappear if we restrict to convex
sets.

We use the $\sup$ modality for the powerset functor and the
expectation ($\expectSymb$) modality for the distribution functor. Our
aim is to study the convex powerset functor $\cfun$ and
establish the rather non-trivial result that the combined modality
$\lambda_X(f)(A) = \sup \{ \expect{\mu}{f} \mid \mu\in A \}$ is indeed expressive on its own.

As before, $\kantsym{\lambda} \le \wass{\lambda}$ holds in categorical
generality, so the main task is to prove the converse inequality
$\wass{\lambda} \le \kantsym{\lambda}$.  As discussed earlier, it is
known that the coupling-based representation in terms of $\lambda$
decomposes into the coupling-based representations in terms of $\sup$ and
$\expectSymb$, i.e.\ the Hausdorff and Kantorovich-Rubinstein liftings, respectively:
\begin{equation*}
  \wass{\lambda}(d) = \wass{\sup}(\wass{\expectSymb}(d)) = \dh(\dk(d)) =: \dhk(d)
\end{equation*}
\pwnote{Discuss that duality amounts to proving that Kantorovich compositionality holds for the subfunctors?}

% \subsection{Duality}

\noindent
To achieve the duality result, it will be convenient to pass from pseudometric spaces to metric spaces.
Recall that the \emph{metric quotient} of a pseudometric space $(X,d)$ is the metric space $(X_\sim,d_\sim)$ where $X_\sim$ is the set of equivalence classes of the equivalence relation $x \sim y \iff d(x,y) = 0$, and $d_\sim([x],[y]) = d(x,y)$ for any two equivalence classes $[x],[y]\in X_\sim$.

\begin{lemmarep}\label{lem:convex-metric-quotient}
  Let $(X,d)$ be a pseudometric space, let $(X_\sim,d_\sim)$ be its metric quotient, and let $\pi\colon X\to X_\sim, x\mapsto [x]$.
  We then have, for every $A,B\in\cfun X$,
  \begin{equation*}
    \kantsym{\lambda}(d)(A,B) = \kantsym{\lambda}(d_\sim)(A_\sim,B_\sim)
    \quad\text{and}\quad
    \wass{\lambda}(d)(A,B) = \wass{\lambda}(d_\sim)(A_\sim,B_\sim),
  \end{equation*}
  where $A_\sim = \cfun\pi(A)$ and $B_\sim = \cfun\pi(B)$.  
\end{lemmarep}
\begin{proof}
  We begin with the claim for the price-function-based lifting.
  The key observation here is that there is a bijection between the nonexpansive maps from $(X,d)$ to $([0,1],\de)$ and the nonexpansive maps from $(X_\sim,d_\sim)$ to $([0,1],d_e)$.
  This is because nonexpansiveness implies that every map in the former set must be constant on every equivalence class, and the map $f \mapsto ([x] \mapsto f(x))$ is therefore well-defined.
  It is also bijective, with inverse $g \mapsto g\circ\pi$.
  The claimed equality now follows because we have, for every $g\colon(X_\sim,d_\sim)\to([0,1],\de)$ nonexpansive, that
  \begin{equation*}
    \lambda_{X_\sim}(g\circ\pi)(A) = \lambda_X(g)(A_\sim)
    \qquad\text{and}\qquad
    \lambda_{X_\sim}(g\circ\pi)(B) = \lambda_X(g)(B_\sim)
  \end{equation*}
  by naturality of $\lambda$.

  For the coupling-based lifting, we make use of the fact that $\wass{\lambda}(d) = \dhk(d)$, and the latter decomposes into two price-function-based liftings for the predicate liftings $\sup$ and $\expectSymb$.
  We can therefore reason similarly as in the previous proof.
  We first show that there is a bijection between the nonexpansive maps from $(\dfun X,\dk(d))$ to $([0,1],\de)$ and the nonexpansive maps from $(\dfun X_\sim,\dk(d_\sim))$ to $([0,1],\de)$.
  Indeed, let $f$ be in the former set, and let $\mu_\sim\in\dfun X_\sim$.
  Then for any two $\mu,\mu'\in(\dfun\pi)^{-1}(\mu_\sim)$ we have $\dk(d)(\mu,\mu') = 0$ and hence $f(\mu) = f(\mu')$, so that the map $f\mapsto(\mu_\sim \mapsto f(\mu))$ is well-defined, and its inverse is given by $g\mapsto g\circ\dfun\pi$.
  Therefore,
  \begin{align*}
    &\dhk(d)(A,B) \\
    &= \kantsym{\sup}(\dk(d))(A,B) \\
    &= \sup \{ |\sup_{\nu\in B} f(\nu) - \sup_{\mu\in A} f(\mu)| \mid f\colon(\dfun X,\dk(d))\to ([0,1],\de) \text{ nonexp.} \} \\
    &= \sup \{ |\sup_{\nu\in B} (g\circ\dfun\pi)(\nu) - \sup_{\mu\in A} (g\circ\dfun\pi)(\mu)| \\
    &\hspace{4cm} \mid g\colon(\dfun X_\sim,\dk(d_\sim))\to ([0,1],\de) \text{ nonexp.} \} \\
    &= \sup \{ |\sup_{\nu_\sim\in B_\sim} g(\nu_\sim) - \sup_{\mu_\sim\in A_\sim} g(\mu_\sim)| \\
    &\hspace{4cm} \mid g\colon(\dfun X_\sim,\dk(d_\sim))\to ([0,1],\de) \text{ nonexp.} \} \\
    &= \dhk(d_\sim)(A_\sim,B_\sim).\tag*{\qed}
  \end{align*}
\end{proof}

\noindent Using the above lemma, we may therefore from now on assume that we are working over a metric space $(X,d)$.
We may also assume that $X\neq\emptyset$, as otherwise $\cfun X=\emptyset$ and both coupling-based and price-function-based distance are the empty metric, hence equal.

The main intuition behind the proof is best understood in the case where $X = \{x_1,\dots,x_n\}$ is finite, even though the proof will work for arbitrary $X$.
In this case we may view probability distributions and fuzzy predicates on $X$ as vectors in $\real^n$, and the expectation modality simply computes the dot product between two such vectors:
\begin{math}
  \expect{\mu}{f} = \mu(x_1)\cdot f(x_1) + \dots + \mu(x_n)\cdot f(x_n).
\end{math}
If $A,B\in \cfun X$ satisfy $\dhk(d)(A,B) > \epsilon$, then this means, by the definition of the Hausdorff distance, that there must be some $\mu\in A$ such that $\dk(\mu,\nu) > \epsilon$ for all $\nu\in B$ (or we are in the symmetric situation with $A$ and $B$ swapped).
This implies that $B$ and the $\epsilon$-ball around $\mu$ are disjoint convex sets, so we can apply the hyperplane separation theorem to find a hyperplane $H$ such that the two sets lie on opposite sides of that plane (more precisely, $B$ may intersect the hyperplane, but the $\epsilon$-ball may not).
A price function witnessing distance at least $\epsilon$ under the combined modality $\lambda$ can now be constructed from the normal vector of $H$.

Our proof mostly follows the outline above, but because we may now be working with infinite-dimensional vector spaces, some functional analysis will be required.
We leverage this added complexity in~\cref{sec:borel}, where we show that the duality result remains true when passing from discrete probability measures to Borel probability measures.

We fix a point $x_0\in X$ and consider the vector space $\Lip_0(X)$ consisting of the real-valued Lipschitz functions on $X$ vanishing at $x_0$:
\begin{equation*}
  \Lip_0(X) = \{ f\colon X\to\real \mid f(x_0) = 0, \sup_{x\neq y} \textstyle\frac{f(y)-f(x)}{d(x,y)} < \infty \}.
\end{equation*}
This is a Banach space with norm given by $\lipnorm{f} = \sup_{x\neq y} \frac{f(y)-f(x)}{d(x,y)}$.
We will construct our price function in this space, which is made possible by the fact that the set of probability distributions can be mapped into its dual:
\begin{lemmarep}\label{lem:embed-into-dual-discrete}
  The set $\dfun X$ embeds into the continuous dual space $\Lip_0(X)^*$, that is, every discrete probability measure $\mu$ gives rise to a continuous linear functional $L_\mu\colon \Lip_0(X)\to\real$, which may explicitly be given by $L_\mu(f) = \expect{\mu}{f}$.
\end{lemmarep}
\begin{proof}
  This follows by \cref{lem:embed-into-dual}, because every discrete probability measure is also a Borel measure.\qed
\end{proof}

\noindent
In what follows, we often do not distinguish between $\mu$ and $L_\mu$ and treat $\dfun X$ as a subset of $\Lip_0(X)^*$.
We equip $\Lip_0(X)^*$ with the \emph{weak-* topology}, which is the weakest topology on $\Lip_0(X)^*$ making all the maps $\psi \mapsto \psi(f)$ for $f\in\Lip_0(X)$ continuous.
Equivalently, this is the \emph{initial topology} wrt.\ the maps $\psi \mapsto \psi(f)$.
Crucially, this topology coincides with the one given by the Kantorovich-Rubinstein distance:

\begin{lemmarep}\label{lem:topologies-coincide-discrete}
  Let $(\mu_n)_{n\in\nat}$ be a sequence in $\dfun X$ and let $\mu\in\dfun X$.
  Then $\mu_n\to\mu$ in the topology given by $\dk(d)$ iff $L_{\mu_n}\to L_\mu$ in the weak-* topology.
\end{lemmarep}
\begin{proof}
  This follows by Lemma~\ref{lem:topologies-coincide}, again because every discrete probability measure is a Borel measure.\qed
\end{proof}

\noindent
The space $\Lip_0(X)^*$ is normed via the \emph{operator norm} $\opnorm{\psi} = \sup_{\lipnorm{f} \le 1} \psi(f)$.
This norm relates to the Kantorovich-Rubinstein metric as follows:

\begin{lemmarep}\label{lem:wasserstein-norm-discrete}
  For any $\mu,\nu\in\dfun X$ we have $\dk(d)(\mu,\nu) = \opnorm{\nu-\mu}$.
\end{lemmarep}
\begin{proof}
  This follows by \cref{lem:wasserstein-norm}, again because every discrete probability measure is a Borel measure.\qed
\end{proof}

\noindent
We are now in a position to state and prove the duality result:

\begin{toappendix}
  \noindent
  We record the following fact about the weak-* topology:

  \begin{lemma}\label{lem:weak-star-representation}
    Let $V$ be a topological vector space, and let $V^*$ be its continuous dual space, equipped with the weak-* topology.
    Then for every continuous linear functional $\psi\colon V^* \to\real$ there exists $v\in V$ such that $\psi(f) = f(v)$ for every $f\in V^*$.
  \end{lemma}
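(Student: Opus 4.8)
\textit{Proof proposal.} The plan is to establish the classical fact that the continuous dual of $V^*$ under the weak-* topology is canonically recovered by the evaluation maps of $V$. Recall that the weak-* topology on $V^*$ is, by definition, the initial topology with respect to the evaluations $\mathrm{ev}_v\colon V^*\to\real$, $f\mapsto f(v)$, for $v\in V$. Hence a basic open neighbourhood of $0\in V^*$ has the form $U = \{f\in V^* \mid |f(v_i)| < \delta,\ i=1,\dots,n\}$ for finitely many $v_1,\dots,v_n\in V$ and some $\delta>0$. The first step is to invoke continuity of $\psi$ at the origin: it yields such a $U$ with $|\psi(f)| < 1$ for all $f\in U$.

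The second step is to deduce that $\psi$ annihilates the linear subspace $N := \bigcap_{i=1}^n \ker(\mathrm{ev}_{v_i}) = \{f\in V^* \mid f(v_i)=0,\ i=1,\dots,n\}$. Indeed, if $f\in N$ then $tf\in U$ for every scalar $t$, so $|t|\,|\psi(f)| = |\psi(tf)| < 1$ for all $t\in\real$, which forces $\psi(f)=0$. The third step is to apply the standard linear-algebra fact that a linear functional vanishing on the common kernel of finitely many linear functionals $\mathrm{ev}_{v_1},\dots,\mathrm{ev}_{v_n}$ is a linear combination of them: there are scalars $c_1,\dots,c_n\in\real$ with $\psi = \sum_{i=1}^n c_i\,\mathrm{ev}_{v_i}$. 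Then, using linearity of each $f\in V^*$, we get $\psi(f) = \sum_{i=1}^n c_i f(v_i) = f\bigl(\sum_{i=1}^n c_i v_i\bigr)$, so $v := \sum_{i=1}^n c_i v_i\in V$ is the required element.

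The argument is essentially routine functional analysis, so I do not expect a serious obstacle; the one point that deserves explicit justification is the finite-dimensional linear-algebra lemma used in the third step, which can be proved by downward induction on $n$ (discarding a functional $\mathrm{ev}_{v_i}$ that lies in the span of the others) or by observing that $\psi$ factors through the linear map $(\mathrm{ev}_{v_1},\dots,\mathrm{ev}_{v_n})\colon V^*\to\real^n$ and using that every linear functional on a subspace of $\real^n$ extends to $\real^n$. Note that no local convexity or completeness of $V$ is needed, since the entire computation takes place with a fixed finite collection of evaluation functionals.
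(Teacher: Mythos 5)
Your proof is correct and follows essentially the same route as the paper's: restrict $\psi$ to a basic weak-* neighbourhood of $0$ determined by finitely many $v_1,\dots,v_n$, deduce via scaling that $\psi$ vanishes on $\bigcap_i \ker(\mathrm{ev}_{v_i})$, and conclude with the standard lemma that a functional annihilating the common kernel of finitely many functionals is a linear combination of them. The only cosmetic difference is that the paper passes through the intermediate bound $|\psi(f)| \le \max_k |f(v_k)|$ before the kernel argument, whereas you scale directly; both are fine.
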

  \begin{proof}
    Let $\psi\colon V^*\to\real$ be a continuous linear functional.
    Then the set $\{f \in V^* \mid |\psi(f)| < 1 \}$ is open and therefore contains a weak-* neighbourhood of $0$.
    This means that there must be $v_1,\dots,v_n\in V$ such that
    \begin{equation}\label{eq:nbhood-equal}
      \bigcap_{1 \le k \le n} \{f\in V^* \mid |f(v_k)| < 1\} \subseteq
      \{f \in V^* \mid |\psi(f)| < 1 \}.
    \end{equation}
    We can now show that $|\psi(f)| \le \max_{1 \le k \le n} |f(v_k)|$:
    Assume that $m > |f(v_k)|$ for all $k$.
    Then $\frac{1}{m}\cdot f$ is a member of the set on the left of~\eqref{eq:nbhood-equal} and hence a member of the set on the right.
    Therefore $|\psi(f)| < m$ by homogeneity.

    This implies that for every $f\in V^*$, if $f(v_k) = 0$ for all $k$, then $\psi(f) = 0$ as well, or, put differently, $\bigcap_{1 \le k \le n} \ker v_k \subseteq \ker\psi$.
    It follows~\cite[Theorem 3.20]{HoffmannKunze71} that $\psi$ is a linear combination of $v_1,\dots,v_n$.\qed
  \end{proof}
\end{toappendix}

% \begin{theorem}[Banach-Alaoglu] \label{thm:banach-alaoglu}
%   Let $V$ be a topological vector space, and let $V^*$ be its continuous dual space, equipped with the weak-* topology.
%   Then the unit ball $\{ f\in V^* \mid \opnorm{f} \le 1 \}$ is compact.
% \end{theorem}

\begin{theoremrep}\label{thm:convex-powerset-duality}
  For any two convex sets $A,B \in\cfun X$,
  \begin{equation*}
    \dhk(d)(A,B) = \kantsym{\lambda}(d)(A,B).
  \end{equation*}
\end{theoremrep}
\begin{proofsketch}
  The inequality $\kantsym{\lambda} \le \dhk$ follows from previous
  results \cite{bbkk:coalgebraic-behavioral-metrics}, hence it is
  sufficient to show $\dhk(d)(A,B) \le \kantsym{\lambda}(d)(A,B)$.
  
  Let $0 < \epsilon < \dhk(d)(A,B)$.
  As outlined above, we may  assume wlog.~that there exists some $\mu\in A$ such that $\dk(d)(\mu,\nu) > \epsilon$ for every $\nu\in B$.
  Let $C$ be the closed $\epsilon$-ball around $\mu$, shifted by $-\mu$, and let $D$ be the closure of $B$, also shifted by $-\mu$.
  These two sets are closed and convex, so by the Hahn-Banach separation theorem there exists a continuous linear functional $g\colon\Lip_0(X)^* \to\real$ and $c\in\real$ such that
  \begin{equation*}
    \sup_{\nu\in C} g(\nu) < c \le \inf_{\nu\in D} g(\nu),
  \end{equation*}
  and because we are in the weak-* topology, the functional $g$ can be represented in the form $\mu\mapsto\expect{\mu}{f}$ for some $f\in\Lip_0(X)$.
  We replace $f$ by $f_1 = -f/\lipnorm{f}$, which results in a nonexpansive function for which
  \begin{equation*}
    |\sup_{\nu\in B} \expect{\nu}{f_1} - \sup_{\nu\in A} \expect{\nu}{f_1}| \ge \epsilon.
  \end{equation*}
  The range of $f_1$ is not necessarily contained in $[0,1]$, but it must be contained in some subinterval of $\real$ of length at most $1$ by nonexpansiveness and because $d$ is $1$-bounded.
  As expectation is linear, we may simply shift $f_1$ by a suitable amount to extract the desired price function. \qed
\end{proofsketch}
\begin{proof}
  The inequality $\kantsym{\lambda} \le \dhk$ follows from previous
  results \cite{bbkk:coalgebraic-behavioral-metrics}, hence it is
  sufficient to show $\dhk(d)(A,B) \le \kantsym{\lambda}(d)(A,B)$.
  
  If $\dhk(d)(A,B) = 0$ there is nothing to show.
  Otherwise, let $0 < \epsilon < \dhk(d)(A,B)$, and wlog.\ assume that
  \begin{equation*}
    \dhk(d)(A,B) = \adjustlimits\sup_{\mu\in A}\inf_{\nu\in B} \dk(d)(\mu,\nu).
  \end{equation*}
  By assumption there exists some $\mu\in A$ such that $\dk(d)(\mu,\nu) > \epsilon$ for every $\nu\in B$.
  Let $C$ be the closed $\epsilon$-ball around $\mu$, shifted by $-\mu$, that is,
  $C = \{ \nu-\mu \mid \nu \in \dfun(X), \dk(d)(\mu,\nu) \le \epsilon \}$.
  The set $C$ is convex and closed, and moreover it is compact by the Banach-Alaoglu theorem.
  Let $D$ be the closure of $B$, also shifted by $-\mu$, that is, $D = \overline{B} - \mu$.
  As the closure of every convex set is convex, $D$ is convex as well.
  By the Hahn-Banach separation theorem there exists a continuous linear functional $g\colon\Lip_0(X)^* \to\real$ and $c\in\real$ such that
  \begin{equation*}
    \sup_{\nu\in C} g(\nu) < c \le \inf_{\nu\in D} g(\nu).
  \end{equation*}
  Shifting back by $\mu$ and passing from the closure of $B$ back to $B$ we get
  \begin{equation*}
    \sup_{\nu: \dk(d)(\mu,\nu) \le \epsilon} g(\nu) < g(\mu) + c \le \inf_{\nu\in B} g(\nu).
  \end{equation*}
  Because we are in the weak-* topology, the functional $g$ can be represented in the form $\mu\mapsto\expect{\mu}{f}$ for some $f\in\Lip_0(X)$ (Lemma~\ref{lem:weak-star-representation}), so
  \begin{equation*}
    \sup_{\nu: \dk(d)(\mu,\nu) \le \epsilon} \expect{\nu}{f}-\expect{\mu}{f} < c \le \inf_{\nu\in B} \expect{\nu}{f} - \expect{\mu}{f}.
  \end{equation*}
  We replace $f$ by $f_1 = f/\lipnorm{f}$, which has norm $1$ and is hence nonexpansive by construction.
  \begin{equation*}
    \sup_{\nu: \dk(d)(\mu,\nu) \le \epsilon} \expect{\nu}{f_1}-\expect{\mu}{f_1} < c/\lipnorm{f} \le \inf_{\nu\in B} \expect{\nu}{f_1} - \expect{\mu}{f_1}.
  \end{equation*}
  By~\cref{lem:wasserstein-norm-discrete} and the definition of the operator norm, the leftmost term in the above chain of inequalities is equal to $\epsilon$, which implies $\epsilon + \expect{\mu}{f_1} \le \inf_{\nu\in B} \expect{\nu}{f_1}$.
  We also have $\inf_{\nu\in A} \expect{\nu}{f_1} \le \expect{\mu}{f_1}$ because $\mu\in A$.
  Finally, we replace $f_1$ by the function $f'_1(x) = \sup_{x'\in X} f_1(x') - f_1(x)$.
  Then $f'_1$ is also nonexpansive and its range is therefore in $[0,1]$, because $d$ is a $1$-bounded metric.
  \begin{multline*}
    \kantsym{\lambda}(d)(A,B)
    \ge |\sup_{\nu\in B} \expect{\nu}{f'_1} - \sup_{\nu\in A} \expect{\nu}{f'_1}| \\
    = |\inf_{\nu\in B} \expect{\nu}{f_1} - \inf_{\nu\in A} \expect{\nu}{f_1}|
    \ge \expect{\mu}{f_1} + \epsilon - \expect{\mu}{f_1} = \epsilon. \qedhere
  \end{multline*}
  From the first to second line we used linearity of expectation, which causes the constant term $\sup_{x'\in X} f_1(x')$ to cancel and the suprema to flip to infima. \qed
\end{proof}

\begin{remark}
  A natural question to ask is whether one can, like in the previous section, also obtain a fuzzy-relational version of the duality result.
  We expect the answer to be `yes', but that it will be necessary to additionally consider the dual predicate lifting $\kappa_X(f)(A) = \inf \{\expect{\mu}{f} \mid \mu\in A\}$, resulting in the duality result $\kantrel{\smash{\{\lambda,\kappa\}}} = \wass{\lambda}$.
  This would be reflective of the situation that arises in the case of the Hausdorff extension, where $\kantrel{\{\sup,\inf\}} = \wass{\sup} = \dh$ holds~\cite{FuzzyLaxHemi}.
  We leave this question open for now.
\end{remark}

\begin{remark}[Compositionality]
  It has been shown in work on \emph{correspondences} between
  price-function-based and coupling-based representations of metric
  liftings~\cite{HumeauEA25} (cf.~\autoref{sec:introduction} under
  \emph{related work}) that such correspondences can be combined along
  sums and product of functors, so that one arrives at general
  correspondence results for classes of functors obtained by closing
  given basic building blocks (originally constant functors, identity,
  powerset, and distributions) under sum and product. One thus obtains
  correspondences for composite system types such as labelled Markov
  chains~\cite[Example~41]{HumeauEA25}. The correspondences produced
  in this way are not generalized Kantorovich-Rubinstein dualities in
  the strict sense we use here, as the transition from the coupling-based
  presentation to the price-function-based presentation in general involves the
  introduction of additional modalities. In particular, this happens
  for products, where one needs to introduce separate modalities for
  the factors (indeed, this is what is morally behind the fact that
  generalized Kantorovich-Rubinstein duality fails for the squaring
  functor~\cite{bbkk:coalgebraic-behavioral-metrics}). Nevertheless,
  our results on generalized Kantorovich-Rubinstein duality for
  L\'evy-Prokhorov distance and convex powerset imply that these
  functors can now be used as additional basic building blocks in this
  framework.
\end{remark}

\subsection{Algorithmic considerations}

A nice aspect of the duality result for the convex powerset functor is that it can be used as the basis of an algorithm to compute values of the distance $\dhk(d)$.
Explicitly, if $(X,d)$ is a \emph{finite} pseudometric space, and $A_0,B_0\subseteq\dfun X$ are \emph{finite} sets of probability measures, the problem is to compute the distance $\dhk(d)(A,B)$, where $A = \conv(A_0)$ and $B = \conv(B_0)$.
The distance expands as follows:
\begin{equation*}
  \dhk(d)(A,B) = \max(\adjustlimits\sup_{\mu\in A} \inf_{\nu\in B} \dk(d)(\mu,\nu),\adjustlimits\sup_{\nu\in B}\inf_{\mu\in A} \dk(d)(\mu,\nu)).
\end{equation*}
As the map $\mu \mapsto \inf_{\nu\in B} \dk(d)(\mu,\nu)$ is convex, the left supremum can instead be taken over $A_0$ without changing the value, and similarly we may take the right supremum over $B_0$.
It is however not in general true that the infima may be taken over $B_0$ and $A_0$, respectively.
\begin{example}\label{expl:convex-powerset}
  Let $X = \{x,y,z\}$, and assume $d(x,y) = d(x,z) = d(y,z) = 1$.
  Let $A_0 = \{\mu_0,\mu_1\}$,
  $B_0 = \{\mu_2,\mu_3\}$,
  $\mu_0 = \frac{1}{3}\cdot x + \frac{1}{3}\cdot y + \frac{1}{3}\cdot z$,
  $\mu_1 = \frac{2}{3}\cdot x + \frac{1}{3}\cdot y$,
  $\mu_2 = \frac{2}{3}\cdot y + \frac{1}{3}\cdot z$,
  $\mu_3 = \frac{2}{3}\cdot z + \frac{1}{3}\cdot x$.
  Then the minimal distance from $\mu_1$ to $B$ is witnessed by
  $\mu^* = \frac{1}{2}\cdot \mu_2 + \frac{1}{2}\cdot \mu_3 =
  \frac{1}{6}\cdot x + \frac{1}{3}\cdot y + \frac{1}{2}\cdot z$.
  See~\Cref{fig:convex-powerset} for an illustration.

  \begin{figure}[h]
    \centering
    \includegraphics[width=0.5\textwidth]{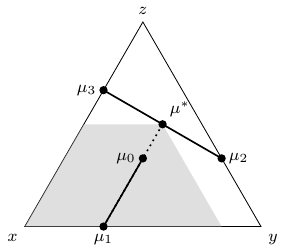}
    \caption{Illustration of \cref{expl:convex-powerset}. The thick line segments correspond to the sets $A = \conv(A_0)$ and $B = \conv(B_0)$, while the shaded region shows the radius-$\frac{1}{2}$-ball around~$\mu_1$, which is part of a regular hexagon centered at $\mu_1$.}
    \label{fig:convex-powerset}
  \end{figure}
\end{example}

\noindent
To compute the inner infimum, therefore, a more sophisticated approach is required.
\c{C}elik et al.~\cite{CelikEA20} consider the problem of solving an optimal transport problem between probability distributions $\mu$ and $\nu$, where $\mu$ is fixed and $\nu$ ranges over some algebraic variety.
Their methods can be adapted to our setting, where~$\nu$ instead ranges over a convex set~$B$:

Consider the typical flow network that one constructs to solve the optimal transport problem, i.e.\ a complete bipartite graph whose partitions are two copies of the set $X$, which we think of as representing the two probability distributions $\mu$ and $\nu$.
A transport plan consists of assigning weights to the edges of this graph in such a way that the sum of weights of incident edges for each vertex matches its probability.
The key idea is that there always exists an optimal transport plan that is acyclic in the sense that the edges with positive weight do not contain any cycle.
Moreover, if the tree of edges used by the transport plan is known, then the weights of the plan are uniquely determined by the probabilities in $\mu$ and $\nu$ and can be computed by a depth-first (or breadth-first) traversal of the tree.
This means that the distance between $\mu$ and $\nu$ can be computed by enumerating all spanning trees of the complete bipartite graph, computing the weights, and taking the least value over all trees where the weights are all non-negative.
If $\mu\in A_0$ is fixed and $\nu$ ranges over $B$, then the weights in the coupling corresponding to a given spanning tree are linear combinations of the probabilities in $\nu$.
The conditions that these weights are non-negative define a linear program whose variables are the coefficients in the convex combination of the elements of $B_0$ and whose constraints state that the weights in the tree are all non-negative, and the distance can be found by solving all these linear programs.

The algorithm we just described requires solving exponentially many linear programs in the size of $X$, as there is one such linear program for each spanning tree.
If we expand the dual representation of the distance between $A$ and $B$ instead, we obtain
\begin{equation*}
  \kantsym{\lambda}(d)(A,B) =
  \sup \{ |\sup_{\nu\in B} \expect{\nu}{f} - \sup_{\mu\in A} \expect{\mu}{f}|
    \mid f\colon(X,d)\to(\real,\de) \text{ nonexp.}\}.
\end{equation*}
This quantity is much easier to compute, as there are no nested suprema/infima.
Similar to before, as expectation is linear, the two suprema above may instead be taken over $A_0$ and $B_0$, respectively.
To compute the distance, we can employ the following algorithm.
Loop over all pairs $(\mu_0,\nu_0)\in A_0\times B_0$.
Given $\mu_0$ and $\nu_0$, the subproblem is then to find the supremum above for all the nonexpansive $f$ such that $\expect{\mu_0}{f} = \sup_{\mu\in A}\expect{\mu}{f}$ and $\expect{\nu_0}{f} = \sup_{\nu\in B}\expect{\nu}{f}$.
This subproblem can be rephrased as a linear program over the variables $(f_x)_{x\in X}$ corresponding to the function values of $f$.
Nonexpansiveness of $f$, the constraints $\expect{\mu_0}{f} \ge \expect{\mu}{f}$ for $\mu\in A_0$ and the similar constraints for the $\nu\in B_0$ are all easily expressed as linear inequalities between the $f_x$.
The objective function is $|\expect{\nu_0}{f} - \expect{\mu_0}{f}|$, which is the maximum of two linear expressions over the $f_x$, so we can simply solve the linear program twice, once for each of the two expressions.
This algorithm has a runtime complexity that is polynomial in $|X|$, $|A_0|$ and $|B_0|$, a clear improvement over the exponential complexity for the previous approach.

\subsection{Borel Measures}
\label{sec:borel}

\newcommand*{\dfunpmet}{\dfun^\mathsf{Bor}}
\newcommand*{\cfunpmet}{\cfun^\mathsf{Bor}}

As the categorical coupling-based and price-function-based constructions are typically considered in the shape of liftings or lax extensions of set endofunctors, their probabilistic instances are restricted to dealing with discrete probability distributions by necessity.
The (probabilistic) Kantorovich-Rubinstein duality, meanwhile, is known to hold for much larger classes of probability distributions, such as Radon measures on metric spaces~\cite{v:optimal-transport}.
In this section we show that this is also true for the convex powerset duality.

For a pseudometric space $(X,d)$ we denote by $\Bor(X,d)$ the set of Borel probability measures, i.e.\ the probability measures defined on the $\sigma$-algebra generated by the open balls $\ball{d}{\epsilon}{x}$.
% We can equip this set with the Lévy-Prokhorov distance, which is the same definition as before, except that we restrict to measurable sets:
% \begin{equation*}
%   \dlp(d)(\mu,\nu) = \inf \{ \epsilon > 0 \mid \forall A\subseteq X \text{ measurable}.\; \mu(A) \le \nu(A^d_\epsilon) + \epsilon \}.
% \end{equation*}
% This results in an endofunctor $\dfunpmet\colon\pmet\to\pmet$ given by $\dfunpmet(X,d) = (\Bor(X,d),\dlp(d))$.
Every convex combination of Borel measures is itself a Borel measure.
We can therefore also define a functor $\cfunpmet\colon\pmet\to\pmet$ where $\cfunpmet(X,d)$ is the set of non-empty convex subsets of $\Bor(X,d)$, equipped with Hausdorff-Kantorovich distance $\dhk(d) = \dh(\dk(d))$, where both $\dh$ and $\dk$ are defined as before.
Note that every nonexpansive map on $(X,d)$ is Borel-measurable, so that no issues arise when taking expected values.
\begin{gather*}
  \dk(d)(\mu,\nu) = \sup \{ |\sint{X}{f}{\nu} - \sint{X}{f}{\mu}| \mid f\colon (X,d)\to([0,1],\de) \text{ nonexp.}\} \\
  \dh(d)(A,B) = \max(\adjustlimits \sup_{x\in A}\inf_{y\in B} d(x,y), \adjustlimits\sup_{y\in B}\inf_{x\in A} d(x,y))
\end{gather*}

\noindent
We emphasize that $\cfunpmet$ is not the lifting of a set functor, as the definition of the underlying set depends on the pseudometric on the base space.

Due to a general result by Goncharov et al.~\cite{KantorovichFunctors}, we know that $\cfunpmet$ admits a Kantorovich representation for some suitable class of predicate liftings.
% A predicate lifting for a functor $F\colon\pmet\to\pmet$ is a natural transformation $\lambda\colon\pmet(-,\real^\kappa)\to\pmet(F-,\real)$,\pwnote{I'm not actually sure what number space to put there\dots} where $\kappa$ is some cardinal number.
By~\cite[Theorem 5.3]{KantorovichFunctors} it suffices to show that it preserves \emph{initial morphisms}.
In the real-valued setting these correspond to \emph{isometries}, i.e.\ nonexpansive maps $f\colon(X,d_1)\to(Y,d_2)$ such that $d_2(f(x),f(x')) = d_1(x,x')$ for all $x,x'\in X$.
\begin{lemmarep}
  $\cfunpmet$ preserves isometries.
\end{lemmarep}
\begin{proof}
  We know that $\dhk(d)$ arises as the composition of two Kantorovich constructions for the predicate liftings $\sup$ and $\expectSymb$, which, by the mentioned~\cite[Theorem 5.3]{KantorovichFunctors} individually preserve isometries, so their composition does as well.\qed
\end{proof}

\noindent
The class of predicate liftings one obtains is quite large; we show
that one can in fact make do with just a single predicate lifting,
which strengthens the corresponding instance of the coalgebraic
quantitative Hennessy-Milner
theorem~\cite[Corollary~5.10]{KantorovichFunctors} by providing a compact
explicit syntax for the expressive logic:\pwnote{That corollary has
  quite some preconditions. Do they all hold?}

\begin{toappendix}

\newcommand*{\dc}{\delta^\mathsf{C}}

\noindent
For the proof of \cref{thm:convex-powerset-duality-borel} we introduce the notation
\begin{equation*}
  \dc(d)(A,B) = \sup \{ |\sup_{\nu\in B} \sint{X}{f}{\nu} - \sup_{\mu\in A} \sint{X}{f}{\mu}| 
  \mid f\colon (X,d) \to ([0,1],\de) \text{ nonexp.} \}.
\end{equation*}

\begin{lemmarep}
  Let $(X,d)$ be a pseudometric space, and let $(X_\sim,d_\sim)$ be its metric quotient, and let $\pi\colon X\to X_\sim$, $x\mapsto [x]$.
  We then have, for every $A,B\in\cfunpmet X$,
  \begin{equation*}
    \dc(d)(A,B) = \dc(d_\sim)(A_\sim,B_\sim)
    \quad\text{and}\quad
    \dhk(d)(A,B) = \dhk(d_\sim)(A_\sim,B_\sim),
  \end{equation*}
  where $A_\sim = \cfunpmet\pi(A)$ and $B_\sim = \cfunpmet\pi(B)$.
\end{lemmarep}
\begin{proof}
  For the construction $\dc$ the proof from \cref{lem:convex-metric-quotient} (for $\kantsym{\lambda}$) can be reused without change.
  For the construction $\dhk$ we reuse the other proof from the same lemma (for $\wass{\lambda}$), but it is a bit less obvious that there is a bijection between the sets of nonexpansive maps.
  It is still true, however, that for all $\mu,\mu'\in\Bor(X,d)$ such that $\pi(\mu) = \pi(\mu')$ we have $\dk(d)(\mu,\mu') = 0$ and hence $f(\mu) = f(\mu')$ for every nonexpansive $f$. \qed
\end{proof}

\begin{lemmarep}\label{lem:embed-into-dual}
  The set $\Bor(X,d)$ embeds into the continuous dual space $\Lip_0(X)^*$, that is, every Borel probability measure $\mu$ gives rise to a continuous linear functional $L_\mu\colon \Lip_0(X)\to\real$, which may explicitly be given by $L_\mu(f) = \sint{X}{f}{\mu}$.
\end{lemmarep}
\begin{proof}
  Linearity of $L_\mu$ is clear.
  For continuity, note that over a normed space (such as a Banach space) a linear functional is continuous iff it is bounded, so we show boundedness.
  We have
  \begin{equation*}
    |\sint{X}{f}{\mu}|
    \le \sint{X}{|f(x)|}{\mu(x)}
    \le \sint{X}{\lipnorm{f} d(x,x_0)}{\mu(x)}
    =  \lipnorm{f} \sint{X}{d(x,x_0)}{\mu(x)},
  \end{equation*}
  so $L_\mu$ is bounded because $\mu$ has finite first moment. \qed
\end{proof}
\begin{lemmarep}\label{lem:topologies-coincide}
  Let $(\mu_n)_{n\in\nat}$ be a sequence in $\Bor(X,d)$ and let $\mu\in\Bor(X,d)$.
  Then $\mu_n\to\mu$ in the topology given by $\dk(d)$ iff $L_{\mu_n}\to L_\mu$ in the weak-* topology.
\end{lemmarep}
\begin{proof}
  By the definition of the Wasserstein distance $\dk(d)$
  we have $\mu_n\to\mu$ iff $\sint{X}{f}{\mu_n} \to \sint{X}{f}{\mu}$ for all nonexpansive maps $f\colon X\to\real$.
  As integration is linear, this holds iff $\sint{X}{f}{\mu_n} \to \sint{X}{f}{\mu}$ for all $f\in\Lip_0(X)$.
  This precisely means that $L_{\mu_n}\to L_\mu$ in the weak-* topology. \qed
\end{proof}
\begin{lemmarep}\label{lem:wasserstein-norm}
  For any $\mu,\nu\in\Bor(X,d)$ we have $\dk(d)(\mu,\nu) = \opnorm{\nu-\mu}$.
\end{lemmarep}
\begin{proof}
  This follows more or less by expanding definitions:
  \begin{align*}
    \dk(d)(\mu,\nu) &= \sup \{ \sint{X}{f}{\nu} - \sint{X}{f}{\mu} \mid f \text{ nonexp.} \} \\
    &= \sup \{ \sint{X}{f}{(\nu-\mu)} \mid f \text{ nonexp.} \} = \opnorm{\nu-\mu}.\tag*{\qed}
  \end{align*}
\end{proof}

\end{toappendix}

\begin{theoremrep}\label{thm:convex-powerset-duality-borel}
  Let $(X,d)$ be a pseudometric space and $A,B\in\cfunpmet(X,d)$.
  Then 
  \begin{multline*}
     \dhk(d)(A,B) = \sup \{ |\sup_{\nu\in B} \sint{X}{f}{\nu} - \sup_{\mu\in A} \sint{X}{f}{\mu}\,| 
    \mid \\ f\colon (X,d) \to ([0,1],\de) \text{ nonexp.} \}.
  \end{multline*}
\end{theoremrep}
\begin{proof}
  We begin with the inequality `$\ge$'.
  It holds because $\dhk$ decomposes into two Kantorovich functors with respect to $\sup$ and $\expectSymb$ respectively, and because for every nonexpansive map $f\colon(X,d)\to([0,1],\de)$ the map $\cfunpmet(X,d)\to([0,1],\de)$, $\mu\mapsto\sint{X}{f}{\mu}$ is nonexpansive as well, which implies that every term that's part of the supremum on the right is also part of the supremum on the left.

  For the other inequality `$\le$', we can reuse the proof of \cref{thm:convex-powerset-duality}, as there was no part of it that was specific to discrete probability measures, and the \cref{lem:embed-into-dual-discrete,lem:topologies-coincide-discrete,lem:wasserstein-norm-discrete} carry over to Borel measures as \cref{lem:embed-into-dual,lem:topologies-coincide,lem:wasserstein-norm}. \qed
\end{proof}

\noindent
In the proof, we follow the same steps as before, first passing from pseudometrics to metrics and then leveraging linear algebra to obtain coincidence of the two distances.

\begin{remark}\label{rem:convex-powerset-duality-borel}
  \Cref{thm:convex-powerset-duality-borel} establishes that the composition of the Hausdorff and Kantorovich-Rubinstein distances can be expressed in price-function form with respect to the single modality $A \mapsto \sup_{\mu\in A} \sint{X}{f}{\mu}$.
  To extend this into a full-blown duality result for continuous measures, stating that $\dhk$ also coincides with the corresponding coupling-based construction, two additional steps are needed:
  First, we need to have duality for the two individual distances.
  In the case of the Hausdorff distance, no additional requirements are necessary, but in the case of Kantorovich-Rubinstein distance one needs to restrict from Borel measures over general pseudometric spaces to one of the settings in which duality is known to hold, such as Polish spaces or Radon measures~\cite{v:optimal-transport}.
  Second, one needs to establish a compositionality result for the respective coupling-based constructions, generalizing the one from the discrete case~\cite[Example 6.8.3]{FuzzyLaxHemi}.
  We leave the details of such a duality result for future work.
\end{remark}
  
% \bknote[inline]{Correspondence Wasserstein/Kantorovich
%   \cite{FuzzyLaxHemi}}

\section{Conclusions and Future Work}
\label{sec:conclusion}

\noindent We have proved \emph{generalized Kantorovich-Rubinstein
duality}, i.e.~coincidence of coupling-based (or Wasserstein)
and price-function-based (or Kantorovich or codensity)
presentations of functor liftings induced by a given choice of
modalities, for two important and non-trivial cases: the
L\'evy-Prokhorov distance on distributions, and the standard distance
on convex sets of distributions that arises from composing the
Hausdorff and Kantorovich-Rubinstein metrics. In both cases, we
obtain a characterization of the respective distance by means of
quantitative modal logics defined by the given modalities; for the
case of the L\'evy-Prokhorov distances, this logic is (up to
restriction of the propositional base) the logic of \emph{generally}
previously studied in context of fuzzy description
logics~\cite{SchroderPattinson11}, and in the second case the involved
modality is just the composite of the usual fuzzy diamond and the
expectation modality~\cite{FuzzyLaxHemi}. In the case of convex
powerset, we demonstrate additionally that the price-function-based
presentation plays out algorithmic advantages in the actual
computation of distances.

We leave several key open problems, among them on the one hand the
extension of the duality result for the L\'evy-Prokhorov metric from
discrete to Borel probability distributions,
and on the other hand the extension of the duality result for convex
powerset to unrestricted fuzzy relations in place of pseudometrics
(already established in our result on the L\'evy-Prokhorov
metric). The latter generalization will amount to a duality result for
the known coupling-based lax extension of convex
powerset~\cite{FuzzyLaxHemi}. Also, we aim to capitalize on the
present result in the design of algorithms that actually compute
distinguishing formulae as witnesses of lower bounds on behavioural
distance, complementing recent results on behavioural distance under
the Kantorovich-Rubinstein distance of
distributions~\cite{rb:explainability-labelled-mc,TurkenburgEA26}.

\bibliographystyle{splncs04}
\bibliography{references}

\appendix

\end{document}